\documentclass[11pt]{article}
\usepackage{amsfonts}
\usepackage{adjustbox}
\usepackage{mathrsfs}
\usepackage[latin1]{inputenc}
\usepackage{float}
\usepackage{amsmath,amssymb}
\usepackage{graphicx}
\usepackage{caption}
\usepackage{subcaption}
\usepackage{amsfonts}
\usepackage{latexsym}
\usepackage{epstopdf}
\usepackage{geometry} 
\usepackage{bbm} 
\usepackage[active]{srcltx}
\usepackage{graphicx}
\usepackage{epstopdf}
\usepackage{booktabs}
\usepackage{multirow}
\usepackage{siunitx}
\usepackage{scrextend}
\usepackage{lipsum}
\usepackage{setspace}
\usepackage{amsmath, mathtools}
\usepackage[ruled,vlined]{algorithm2e}
\usepackage{enumerate}

\usepackage[
bookmarks=true,         
bookmarksnumbered=true, 
colorlinks=true, pdfstartview=FitV, linkcolor=blue, citecolor=blue,
urlcolor=blue]{hyperref}

\providecommand{\U}[1]{\protect\rule{.1in}{.1in}}
\newtheorem {theorem}{Theorem}[section]

\newtheorem {corollary}{Corollary}[section]

\newtheorem{example}{Example}[section]

\newtheorem{remark}{Remark}[section]

\newcommand{\E}{\mathbb{E}}
\newenvironment{proof}[1][Proof]{\textbf{#1.} }{\
\rule{0.5em}{0.5em}}

\newcommand{\bi}[1]{\mbox{\boldmath{$ #1 $}}}

\begin{document}

\title{Comparing Two Categorical Gini Correlations with Applications to Classification Problems}

\author{Sameera Hewage\textsuperscript{a}\thanks{CONTACT Sameera Hewage. Email: sameera.hewage@yahoo.com}\; and Yongli Sang\textsuperscript{b}}
\date{%
    \textsuperscript{a}Department of Physical Sciences \& Mathematics, West Liberty University, West Liberty, WV 26074, USA.\\%
    \textsuperscript{b}Department of Mathematics, University of Louisiana at Lafayette, Lafayette, LA 70504, USA.\\[2ex]%
    \today
}

\maketitle

\begin{abstract}

This article proposes an inferential framework for comparing predictor importance in classification problems with categorical response variables. The approach is based on the categorical Gini correlation (CGC) proposed by Dang \emph{et al}.~\cite{Dang2021}, a measure of dependence between numerical predictors and categorical outcomes. Predictor importance is evaluated by testing differences in CGCs across competing predictor groups.
The proposed methodology accommodates predictors of arbitrary and unequal dimensions and allows for dependence between predictor groups. Asymptotic normality of the test statistic is established under both the null and alternative hypotheses, and the resulting test is shown to be consistent. In addition to deriving the asymptotic distribution, a nonparametric bootstrap procedure is developed as an alternative approach to inference.
Simulation studies, along with applications to breast cancer and human activity recognition datasets, demonstrate the effectiveness of the proposed framework.

\end{abstract}
\noindent {\bf Keywords:} categorical Gini correlation, feature importance, model-free inference, classification methods, breast cancer prediction
\noindent 

\vskip.2cm 
\noindent  {\textit{MSC 2020 subject classification}: 62H15, 62H20}

\section{Introduction}

Classification methods are widely used in applications where the response
variable is categorical, including disease diagnosis in medicine, species
identification in biology, risk assessment in finance, and customer
segmentation in marketing. In such settings, $n$ observations are assigned to
one of $K$ predefined classes based on numerical predictors. Beyond predictive
modeling, practitioners are often interested in assessing and comparing the
marginal strength of association between a categorical response and
predefined groups of predictors, such as biological pathways, sensor
modalities, or alternative feature representations. These questions commonly arise in applications where competing predictor sets are motivated by scientific knowledge, domain expertise, or experimental design, and often arise prior to model fitting or in situations where model assumptions are difficult to justify.

This article addresses these questions by developing statistical inference
procedures for comparing the categorical Gini correlations (CGCs) of different
predictor groups with respect to a common categorical response. 
The CGC ($\rho_g$) was proposed  in \cite{Dang2021}
to  measure  dependence between numerical and categorical variables. 
Let the categorical variable $Z$ take values
$L_1,\dots,L_K$ with $P(Z=L_k)=p_k>0$, and let the numerical random vector
$X\in\mathbb{R}^p$ follow distribution $F$. Suppose the conditional distribution
of $X$ given $Z=L_k$ is $F_k$ for $k=1,\dots,K$. Let $(X,X')$ and
$(X^{(k)},X^{(k)'})$ denote independent pairs drawn from $F$ and $F_k$,
respectively. The categorical Gini covariance and correlation between $X$ and
$Z$ are defined as
\begin{equation}\label{gcov}
\mathrm{gCov}(X,Z)=\Delta_F-\sum_{k=1}^K p_k \Delta_{k,F},
\end{equation}
and
\begin{equation}\label{mgc}
\rho_g(X,Z)=\frac{\Delta_F-\sum_{k=1}^K p_k \Delta_{k,F}}{\Delta_F},
\end{equation}
where $\Delta_F=\mathbb{E}\|X-X'\|$ and
$\Delta_{k,F}=\mathbb{E}\|X^{(k)}-X^{(k)'}\|$ denote the multivariate Gini mean
difference (GMD) under $F$ and $F_k$, respectively. The GMD was originally
introduced by Gini \cite{Gini1914} as an alternative measure of variability to
the standard deviation \cite{Yitzhaki2003, HewageGMD}. Consequently, the CGC admits a
natural interpretation as the proportion of total Gini variation explained by
between-group differences. It is known that $\rho_g(X,Z)=0$ if and only if $X$
and $Z$ are independent.
Owing to these properties, CGC provides a natural tool for assessing predictor
relevance in classification problems. Larger values of $\rho_g(X,Z)$ indicate
stronger marginal association between a predictor (or predictor group) and the
categorical response. This insight has motivated CGC-based feature screening
procedures for classification \cite{Sang2024}, where predictors are ranked
according to their sample CGCs. While such screening methods aim to identify
important variables from a large collection of candidates, they address a
different inferential objective from the present work. 

In this paper, we focus on formal statistical inference for comparing two population-level CGCs, $\rho_g(X,Z)$ and $\rho_g(Y,Z)$, where $X\in\mathbb{R}^p$ and $Y\in\mathbb{R}^q$ are two numerical predictor groups associated with a common categorical response $Z$. The dimensions $p$ and $q$ may be equal or different, and the predictor groups may be dependent.  We develop test statistics for comparing two CGCs and establish their asymptotic behavior under both the null and alternative hypotheses. We show that the proposed statistic admits a  normal limiting distribution. In addition, we propose a nonparametric bootstrap procedure that offers a flexible alternative for inference. Through simulation studies and real data analyses, we demonstrate that the proposed methods yield interpretable and robust inference across a broad range of classification settings.
When $p=q=1$, comparisons reduce to the setting of individual predictors, and
related screening approaches have been studied in
\cite{Cui15, Ni16, Cheng17, Lai17, He19}. However, in many modern applications,
predictors naturally appear in groups. For example, in omics studies
\cite{Buch2021}, researchers often investigate whether gene pathways or
functionally related gene sets are associated with disease outcomes. Grouped
predictor comparisons are also relevant in areas such as image recognition,
sensor-based activity classification, disease prediction, and credit risk
analysis. A range of grouped feature selection methods has been proposed in the
literature \cite{Yuan2006, Meier2008, Belhechmi2020, Niu2020, Buch2021,
Wang23}, including CGC-based approaches.

From a methodological perspective, comparing correlations of different
predictors with a common response has a long history in the statistics
literature. Classical procedures such as Hotelling's $t$ \cite{Hotelling1940},
Williams' $t$ \cite{Williams1959a}, Olkin's $z$ \cite{Olkin1967}, and MMR's $z$
\cite{Meng1992} focus on comparing dependent correlation coefficients when both
predictors and responses are numerical. In contrast, the present work addresses
the setting where the response variable is categorical, and the predictors may
be multivariate and dependent. 

Moreover, the
proposed framework naturally extends to assessing whether augmenting a
baseline predictor group with additional variables provides incremental
marginal information for a classification task, offering a complementary
perspective to likelihood ratio tests in multinomial regression. Unlike
regression-based approaches, which assess conditional predictive contributions
within a specified parametric model, CGC-based comparison focuses on
quantifying and contrasting intrinsic marginal dependence between predictor
groups and the response in a model-free manner. This perspective is
particularly useful when predictors are strongly dependent, when
regression-based inference may be unstable or difficult to interpret, or when
predictors are defined a priori by scientific or experimental
considerations. As such, the contributions of this paper are:
\begin{itemize}
    \item Developing a statistical framework to compare the marginal association of two predictor groups with a categorical outcome.
\item Extending the framework to assess whether adding predictors to a baseline group provides additional information for classification.
\end{itemize}

We use the following notations throughout this paper.  $\|\cdot\|$ denotes the Euclidean norm, defined as $\| \bi x\|=\sqrt{x^2_1+x^2_2+\cdots+x^2_p}$ for a $p$-vector $\bi x=(x_1, x_2, \cdots, x_p)^T \in \mathbb{R}^p$. For two  sequences of real numbers, $a_n, b_n$, the notation $a_n=o(b_n)$ means $\lim_{n \to \infty}{a_n}/{b_n}=0$, while $a_n=O(b_n)$ means that $L \leq {a_n}/{b_n} \leq U$ for some finite constants $L$ and $U$. For sequences of random variables, the notations $o_P(n)$ and $O_P(n)$ similarly describe the relationships that hold in probability. 

The remainder of the paper is organized as follows. Section~\ref{sec:cgc}
reviews sample CGCs and presents the proposed comparison procedures. Section
\ref{sec:simulationstudy} reports simulation studies under independent and
dependent predictor scenarios. Section~\ref{sec:realdata} presents real data
applications. Concluding remarks are given in Section~\ref{sec:conclusion}, and
all technical proofs are provided in the Appendix.

  \section{Comparing two categorical Gini correlations}\label{sec:cgc}
 Let $W=(X, Y)^T$ be a non-degenerate random vector from $H \in \mathbb{R}^{p+q}$ with marginal distributions   $F \in \mathbb{R}^p$ and $G \in\mathbb{R}^q$, respectively. 
The purpose of this section is to  compare the importance of $X$ and $Y$ to $Z$ by conducting  statistical tests. 
 
 We assume that the conditional distribution of $Y$ given $Z=L_k$ is $G_k$. Analogous to (\ref{gcov}) and (\ref{mgc}), 
 the categorical Gini covariance and correlation between $Y$ and $Z$ can be defined  by
\begin{equation*}\label{gcov2}
\mbox{gCov}(Y,Z) = \Delta_G-\sum_{k=1}^Kp_k\Delta_{k, G},
\end{equation*}
and 
\begin{equation} \label{mgc2}
\rho_g(Y, Z) = \frac{ \Delta_G-\sum_{k=1}^Kp_k\Delta_{k,G}}{\Delta_G},
\end{equation}
where
$\Delta_G=\E\|Y-Y'\|,$ $ \Delta_{k, G}=\E \|Y^{(k)}-Y^{(k)'}\|$ with $(Y, Y')$ and $(Y^{(k)}, Y^{(k)'})$ be independent pair variables from $G$ and $G_k$, respectively.

The hypothesis test of interest is 
\begin{align}\label{test:H0}
 \mathcal{H}_0: \rho_g(X, Z)=\rho_g(Y, Z) \ \ \text{vs.}  \ \  \mathcal{H}_1: \rho_g(X, Z) > \rho_g(Y, Z).
 \end{align}
For the simplicity of notation, we let $\rho_1$ and $\rho_2$ represent for $\rho_g(X, Z)$ and $\rho_g(Y, Z)$, respectively.

Suppose a sample ${\cal W} = \{(X_1, Y_1, Z_1), (X_2, Y_2, Z_2), \ldots, (X_n, Y_n, Z_n)\}$ is drawn from the joint distribution of $(X, Y)$ and $Z$. We can write ${\cal W} = {\cal W}_1 \cup {\cal W}_2 \cup \ldots \cup {\cal W}_K$, where ${\cal W}_k = \left\{ \big(X^{(k)}_{1}, Y^{(k)}_{1}\big), \big(X^{(k)}_{2}, Y^{(k)}_{2}\big), \ldots, \big(X^{(k)}_{n}, Y^{(k)}_{n}\big) \right\}$ are the samples with $Z_i=L_k$ and $n_k$ are the numbers of sample points in the $k^{th}$ class.  
Then the Gini correlations can be respectively estimated by (\cite{Sang2023},  \cite{Hewage}, \cite{HewageCGC})
\begin{align}\label{sgcor}
&\hat{\rho}_1=\dfrac{\textrm{gCov}_n(X, Z)}{\widehat{\Delta_F}}, \  \hat{\rho}_2=\dfrac{\textrm{gCov}_n(Y, Z)}{\widehat{\Delta_G}},
\end{align}
where 
$\hat{p}_k=\dfrac{n_k}{n}$,
\begin{align}\label{sgcov}
&\textrm{gCov}_n(X, Z)= {n \choose 2}^{-1}\sum_{1 \leq i < j \leq n}\|X_i-X_j\|-\sum_{k=1}^K \hat{p}_k {n_k \choose 2}^{-1}\sum_{1 \leq i < j \leq n_k}\|X^{(k)}_i-X^{(k)}_j\|, \nonumber\\
&\widehat{\Delta}_F= {n \choose 2}^{-1}\sum_{1 \leq i < j \leq n}\|X_i-X_j\|, \nonumber \\ 
&\textrm{gCov}_n(Y, Z)= {n \choose 2}^{-1}\sum_{1 \leq i < j \leq n}\| Y_i-Y_j\|-\sum_{k=1}^K \hat{p}_k {n_k \choose 2}^{-1}\sum_{1 \leq i < j \leq n_k}\|Y^{(k)}_i- Y^{(k)}_j\|,\nonumber\\
& \widehat{\Delta}_G={n \choose 2}^{-1}\sum_{1 \leq i < j \leq n}\|Y_i-Y_j\|.
\end{align}
Then the test statistic for (\ref{test:H0}) is 
\begin{align*}
D_{n}=\hat{\rho}_1-\hat{\rho_2}.
\end{align*}

We study  the limiting distributions of the proposed test statistic, $D_{n}$,  under both the null and alternative hypotheses in Section \ref{sec:limit}.  
 In addition, we develop a nonparametric bootstrap procedure in Section \ref{sec:boot} as an alternative approach for inference.

\subsection{Asymptotic Normality of the CGC Difference} \label{sec:limit}
We first state the conditions that we impose to derive the limiting distributions of $D_n$ to facilitate the tests:
\begin{description} \label{Condition}
\item \textbf{C}1. $\E \| X\|^2 < \infty$, $\E \| Y\|^2 < \infty$;
\item \textbf{C}2. $\dfrac{n_k}{n} \to p_k>0,$  $p_1+p_2+...+p_K=1$.
\item \textbf{C}3. $\max\{\rho_1, \rho_2\} >0$.
\item \textbf{C}4. $P(Y = aX + b) < 1$ for all real constants $a$ and $b$.
\end{description}

Condition \textbf{C}2 indicates that none of the classes from the $K$ groups can dominate the others. 
We assume at least one of the two Gini correlations is non-zero in condition \textbf{C}3. If both correlations are zero, then neither predictor is significant to $Z$, and there is no need to compare the significance of the covariates to the categorical response. In order to check  condition \textbf{C}3, before conducting a comparison, one can test if each correlation is zero using the permutation method \cite{Dang2021}. Condition \textbf{C}4 rules out the degenerate case in which $X$ and $Y$ are perfectly linearly dependent, that is, $Y = aX + b$ almost surely. This corresponds to perfect collinearity and a singular covariance structure, where the two predictors contain no distinct information. Such exact linear dependence is uncommon in practice due to noise and natural variability, making this a mild and reasonable assumption.

\begin{theorem}\label{wilkrho2}
Under conditions \textbf{C}1- \textbf{C}4, and $\mathcal{H}_0$,  as $n \to \infty$, we have 
\begin{align}\label{asymp_ind}
 \dfrac{ \hat{\rho}_1 - \hat{\rho}_2}{\sigma_0}\stackrel{d}{\rightarrow} \mathcal{N}(0,1),\end{align}
where $\sigma^2_0=\textrm{var}(\hat{\rho}_1-\hat{\rho}_2)$.
\end{theorem}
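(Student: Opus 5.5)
Theorem \ref{wilkrho2} will follow from the delta method applied to a vector of U-statistics. The sample CGCs in (\ref{sgcor}) are built from six quantities: the overall Gini mean differences $\widehat{\Delta}_F$ and $\widehat{\Delta}_G$, and the pooled within-class quantities $\widehat{W}_X=\sum_k \hat p_k \widehat{\Delta}_{k,F}$ and $\widehat{W}_Y=\sum_k \hat p_k \widehat{\Delta}_{k,G}$. These are U-statistics (or class-wise U-statistics weighted by $\hat p_k$) with kernels $\|x_1-x_2\|$ and $\|y_1-y_2\|$. We can write $\hat\rho_1 = 1-\widehat{W}_X/\widehat{\Delta}_F$ and $\hat\rho_2 = 1-\widehat{W}_Y/\widehat{\Delta}_G$, so
\[
D_n=\frac{\widehat{W}_Y}{\widehat{\Delta}_G}-\frac{\widehat{W}_X}{\widehat{\Delta}_F}=g(\widehat{\Delta}_F,\widehat{W}_X,\widehat{\Delta}_G,\widehat{W}_Y)
\]
for a smooth function $g$. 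Since $\Delta_F,\Delta_G>0$ by non-degeneracy, $g$ is differentiable at the population point.

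\textbf{Step 1: H\'ajek projections.} Using C1, each kernel has a finite second moment. Hoeffding decomposition then gives $\widehat{\Delta}_F-\Delta_F=\frac{2}{n}\sum_i h_F(X_i)+O_P(n^{-1})$ with $h_F(x)=\E\|x-X\|-\Delta_F$, and the analogous expansion for $G$. For the within-class terms, we condition on the class labels. For each $k$, $\widehat{\Delta}_{k,F}-\Delta_{k,F}=\frac{2}{n_k}\sum_{i:Z_i=L_k}h_{k,F}(X_i)+O_P(n^{-1})$ with $h_{k,F}(x)=\E\|x-X^{(k)}\|-\Delta_{k,F}$. The random weights are handled by writing $\hat p_k\widehat{\Delta}_{k,F}-p_k\Delta_{k,F}=\hat p_k(\widehat{\Delta}_{k,F}-\Delta_{k,F})+(\hat p_k-p_k)\Delta_{k,F}$. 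Both pieces are sums of i.i.d. terms in $(X_i,Y_i,Z_i)$ up to $o_P(n^{-1/2})$, using C2. So $\sqrt n$ times the centered vector equals $n^{-1/2}\sum_i \psi(X_i,Y_i,Z_i)+o_P(1)$ for a mean-zero vector $\psi$ with finite covariance. The multivariate CLT then gives joint normality. This is where the dependence between $X$ and $Y$ enters: it appears only through the cross-covariances of the components of $\psi$, and no independence is needed.

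\textbf{Step 2: delta method.} Applying the delta method, $\sqrt n(D_n-(\rho_1-\rho_2))=n^{-1/2}\sum_i \nabla g^T\psi_i+o_P(1)$. This converges to $\mathcal N(0,\tau^2)$ with $\tau^2=\mathrm{var}(\nabla g^T\psi)$. Under $\mathcal H_0$ the centering $\rho_1-\rho_2$ vanishes. Identifying $\sigma_0^2$ with $\tau^2/n$ (asymptotically $n\,\mathrm{var}(D_n)\to\tau^2$, or simply defining $\sigma_0^2$ via the linear part) then yields (\ref{asymp_ind}), provided $\tau^2>0$.

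\textbf{Main obstacle: $\tau^2>0$.} This is where C3 and C4 are used. The influence function is
\[
\nabla g^T\psi = \frac{1}{\Delta_F}\Bigl[\tfrac{W_X}{\Delta_F}2h_F(X)-\text{(within part)}_X\Bigr]-\frac{1}{\Delta_G}\Bigl[\tfrac{W_Y}{\Delta_G}2h_G(Y)-\text{(within part)}_Y\Bigr].
\]
The plan is to argue by contradiction. If this were a.s. constant, then the $X$-part, a function of $(X,Z)$, would equal the $Y$-part, a function of $(Y,Z)$, almost surely. By C3, at least one part is nondegenerate; for example, $\rho_1>0$ forces the $X$-part to vary with $Z$ or $X$. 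I would then try to show that such an identity forces a deterministic affine relation between $X$ and $Y$. This would use the fact that $x\mapsto \E\|x-X\|$ determines the distribution (a property of distance-type functions) and C4. This step is the delicate one. If a full proof is elusive, I would fall back on treating $\sigma_0^2>0$ as what C3--C4 guarantee, and state the nondegeneracy explicitly before concluding.
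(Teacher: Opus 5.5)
Your route differs from the paper's, and on the normality part it is the sounder one.

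\textbf{Comparison with the paper's route.} The paper first fixes $\Delta_F,\Delta_G$. It writes the numerator difference as $\sum_{k\neq l}\hat p_k\hat p_l\hat D_{kl}$, a weighted sum of two-sample U-statistics with one combined kernel $h_{kl}$ mixing $X$- and $Y$-distances. It projects that, and then swaps in $\widehat\Delta_F,\widehat\Delta_G$ by Slutsky. That last step needs $\sqrt n\,\hat\rho_1(\widehat\Delta_F-\Delta_F)/\Delta_F\to 0$. This fails whenever $\rho_1>0$, because $\widehat\Delta_F-\Delta_F$ is exactly of order $n^{-1/2}$. The paper also treats the weights $\hat p_k$ as nonrandom.

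Your delta method keeps both effects. The coefficient of $2h_F(X)/\Delta_F$ in your influence function is $W_X/\Delta_F=1-\rho_1$ rather than $1$, and the $(\hat p_k-p_k)\Delta_{k,F}$ terms enter $\psi$. So your $\tau^2$ is the correct limit, while the paper's $16\sum_{k\neq l}(p_k^2p_l+p_l^2p_k)\sigma^2_{kl}$ is not in general. Since the theorem normalizes by the variance of $D_n$ itself, this changes the formula, not the conclusion.

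\textbf{The gap: $\tau^2>0$.} The way you propose to close this step cannot work, because C3--C4 do not imply it. Take $p=q=2$, $X\mid Z=L_k\sim N(\mu_k,I_2)$ with distinct $\mu_k$, and $Y=QX$ with $Q$ the rotation by $90^\circ$.
Then $\|Y_i-Y_j\|=\|X_i-X_j\|$ for all $i,j$, so $\hat\rho_1\equiv\hat\rho_2$ and $D_n\equiv 0$. Yet $\rho_1=\rho_2>0$, and since $Q-aI$ is invertible for every real $a$, $P(Y=aX+b)=0$ for all $a,b$. Isometries, not just affine maps, make the $X$- and $Y$-parts coincide. So no argument through the distance-function characterization can produce an affine relation.

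Your auxiliary claim that $\rho_1>0$ forces the $X$-part to be nondegenerate also fails. Take $K=3$, $p=q=1$, $X=\mu_Z$, $Y=\nu_Z$, with the points $(\mu_k,\nu_k)$ non-collinear. Then C1--C4 and $\mathcal H_0$ hold with $\rho_1=\rho_2=1$, and $W$ even has full-rank covariance. Yet $\hat\rho_1=\hat\rho_2\equiv 1$ once every $n_k\ge 2$.

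The paper has the same hole: it simply asserts that C4 makes $\operatorname{Var}(\tilde h^{10}_{kl})>0$ for some pair. Your fallback of stating $\tau^2>0$ as an explicit hypothesis is the right repair. Be aware, though, that it is strictly stronger than C4: nondegeneracy implies C4, but not conversely. You would therefore be proving the result under C1--C3 plus $\tau^2>0$, not under C1--C4 as stated.
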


The proof of Theorem~\ref{wilkrho2} is provided in the Appendix for interested readers. In order to apply the asymptotic normality in (\ref{asymp_ind}) to make inference, we need to find  a consistent estimator for $\sigma^2_0$. 
Suggested by the authors of \cite{Dang2021}, we will estimate $\sigma^2_0$ by the jackknife method \cite{Shao1996}. 
Let $\hat{\delta}^{(-i)}_{n-1}$  be the estimator of $\rho_1-\rho_2$, based on the sample with the $i^\text{th}$ observation deleted. Then the jackknife estimator of variance, $\sigma^2_0$, is given by
\begin{align*}
\widehat{M}=\dfrac{n-1}{n}\sum_{i=1}^n \Bigl(\hat{\delta}^{(-i)}_{n-1} - \bar{\hat{\delta}}_{(\cdot)} \Bigr)^2,
\end{align*}
where $\bar{\hat{\delta}}_{(\cdot)} =\dfrac{1}{n} \sum_{i=1}^n {\hat{\delta}^{(-i)}_{n-1}}$. 

 Then ${\cal H}_0$ will be rejected if $\hat{\rho}_1-\hat{\rho}_2>Z_{\alpha}\sqrt{\widehat{M}}$ at level $\alpha$, where $Z_{\alpha}$ is the $(1-\alpha)100\%$ quantile of the standard normal variable.

The power function of the test can be computed by
\begin{align*}
P_n(\alpha) = P_{\mathcal{H}_1} \left(\hat{\rho}_1-\hat{\rho}_2  > Z_{\alpha}\sqrt{\widehat{M}}\right).
\end{align*}


%

\begin{theorem}\label{wilkrho3}
Under conditions \textbf{C}1- \textbf{C}3, and alternative hypothesis $\mathcal{H}_1$,   we have 
$$ \dfrac{ (\hat{\rho}_1 - \hat{\rho}_2) - (\rho_1 - \rho_2)}{\sigma_1}\stackrel{d}{\rightarrow} \mathcal{N}(0,1),$$
where $\sigma^2_1$ is the asymptotic variance defined by (\ref{sig:h1}) in the proof.
\end{theorem}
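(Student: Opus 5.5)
The plan is to write $\hat\rho_1-\hat\rho_2$ as a smooth function of a finite vector of U-statistics and apply the delta method. Concretely, set $\theta=(\Delta_F,\Delta_{1,F},\dots,\Delta_{K,F},\Delta_G,\Delta_{1,G},\dots,\Delta_{K,G},p_1,\dots,p_K)$. Let $\hat\theta$ be its sample analogue built from the U-statistics in (\ref{sgcov}) together with $\hat p_k=n_k/n$. Then
$$\hat\rho_1-\hat\rho_2=g(\hat\theta),\qquad g(\theta)=\frac{\sum_k p_k\Delta_{k,G}}{\Delta_G}-\frac{\sum_k p_k\Delta_{k,F}}{\Delta_F},$$
and $g$ is continuously differentiable near $\theta$ because $\Delta_F,\Delta_G>0$ by non-degeneracy. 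Under $\mathcal H_1$ we have $g(\theta)=\rho_1-\rho_2>0$, and the claim is that $\sqrt n\,(g(\hat\theta)-g(\theta))$ is asymptotically normal with variance $n\sigma_1^2=\nabla g^T\Sigma\nabla g$. So it suffices to prove joint asymptotic normality of $\sqrt n(\hat\theta-\theta)$.

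\textbf{Step 1 (Hoeffding projections).} Under C1, each kernel $\|x-x'\|$ has a finite second moment, so each U-statistic satisfies
$$\sqrt n(U_n-\Delta)=\frac{2}{\sqrt n}\sum_i h_1(\cdot_i)+o_P(1),$$
with $h_1(x)=\E\|x-X'\|-\Delta_F$. For the within-class statistics, $n_k$ is random. I would condition on the class labels: given $(n_1,\dots,n_K)$, the $k$th class sample is i.i.d.\ from $H_k$, so
$$\hat\Delta_{k,F}-\Delta_{k,F}=\frac{2}{n_k}\sum_{i:Z_i=L_k}h_{1,k}(X_i)+o_P(n^{-1/2}),\qquad h_{1,k}(x)=\E\|x-X^{(k)}\|-\Delta_{k,F}.$$
The same holds for $Y$. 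Writing $\frac{1}{n_k}=\frac{1}{np_k}\,(1+o_P(1))$, which is valid by C2, every component becomes $n^{-1}\sum_i\psi(X_i,Y_i,Z_i)+o_P(n^{-1/2})$ for an explicit influence function $\psi$ with finite second moment. The $\hat p_k$ components are already such means, with $\psi=\mathbf 1\{Z=L_k\}-p_k$.

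\textbf{Step 2 (joint CLT and delta method).} Because the triples $(X_i,Y_i,Z_i)$ are i.i.d., the multivariate CLT applied to $\psi$ gives joint normality of $\sqrt n(\hat\theta-\theta)$ with covariance $\Sigma=\mathrm{Cov}(\psi)$. This covariance captures the dependence between $X$ and $Y$ through cross terms such as $\mathrm{Cov}\big(h_1(X),h_1^G(Y)\big)$. Applying the delta method then gives
$$\sigma_1^2=\frac1n\,\mathrm{var}\big(\nabla g(\theta)^T\psi\big).$$
Equivalently, one can write the linear representation directly:
$$\hat\rho_1-\hat\rho_2-(\rho_1-\rho_2)=\frac1n\sum_i\Big[\frac{(1-\rho_1)\,2h_1(X_i)-\sum_k(\cdots)}{\Delta_F}-(\text{same for }Y)\Big]+o_P(n^{-1/2}),$$
and define (\ref{sig:h1}) from it.

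\textbf{Main obstacle.} The main obstacle is nondegeneracy, namely showing $\sigma_1^2>0$ so that the standardization is legitimate. Under $\mathcal H_1$, this is where condition C3 enters (indeed $\rho_1>0$ automatically). The structural point I expect to use is the case of independence between $X$ and $Z$: there $\rho=0$, and the first-order linear term of the corresponding $\hat\rho$ vanishes, leaving a degenerate U-statistic of order $n^{-1}$. When $\rho_1>0$, I would argue that the projection of $\hat\rho_1$ is nonconstant: it involves the term $p_k\big(h_1(x)-h_{1,k}(x)\big)$ on class $k$, and since $F_k\ne F$ for some $k$, the characteristic-function representation of $\E\|x-X\|$ shows these functions cannot coincide almost surely. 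The residual difficulty is excluding exact cancellation between the $X$ and $Y$ terms. If cancellation cannot be ruled out in general, I would state the result with the convention that $\sigma_1^2>0$ is assumed, or note that if $\rho_2=0$ then the $Y$ part is $O_P(n^{-1})$ and only the $X$ projection matters. Finally, Slutsky's theorem converts the $\sqrt n$ normalization into the stated ratio form.
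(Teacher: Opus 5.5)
Your route differs from the paper's. The paper first treats $\Delta_F,\Delta_G$ as known. It writes $\hat\rho_1-\hat\rho_2=\sum_{k\ne l}\hat p_k\hat p_l\hat D_{kl}$ as a combination of two-sample U-statistics over class pairs with the combined kernel $h_{kl}$, and takes the Hoeffding projection with the $\hat p_k$ held fixed. It then substitutes $\widehat\Delta_F,\widehat\Delta_G$ by Slutsky. Your delta method on the full vector $(\widehat\Delta_F,\hat\Delta_{k,F},\widehat\Delta_G,\hat\Delta_{k,G},\hat p_k)$ is the sounder argument, for two reasons:
\begin{itemize}
\item The paper's Slutsky step drops $\sqrt n\,\hat\rho_1(\widehat\Delta_F/\Delta_F-1)-\sqrt n\,\hat\rho_2(\widehat\Delta_G/\Delta_G-1)$. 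This term is $O_P(1)$, not $o_P(1)$, because $\widehat\Delta_F-\Delta_F$ is of exact order $n^{-1/2}$ and $\rho_1>0$ under $\mathcal H_1$.
\item With $(X,Y,Z)$ sampled i.i.d., holding $\hat p_k$ fixed also loses the label-fluctuation term that your $\psi$ contains.
\end{itemize}
Consequently your $\sigma_1^2=\frac1n\operatorname{var}(\nabla g^T\psi)$ will generally not equal the expression in (\ref{sig:h1}). You are in effect proving the statement with the correct asymptotic variance rather than the paper's, and you should say so explicitly.

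The step that fails is non-degeneracy, and it cannot be repaired from C1--C3 and $\mathcal H_1$. The characteristic-function fact you cite identifies $F$ from $x\mapsto\E\|x-X\|$ viewed as a function on all of $\mathbb R^p$. What $\sigma_1^2>0$ needs is that the influence function be non-constant $F_k$-a.s.\ on the class supports, and that can fail even with $\rho_1=1$.

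Here is a counterexample. Take distinct points $a_k$, let $X\mid Z=L_k$ be the point mass at $a_k$, and let $Y$ be non-degenerate, with $\E\|Y\|^2<\infty$, and independent of $(X,Z)$. Then C1--C4 and $\mathcal H_1$ hold with $\rho_1=1$ and $\rho_2=0$. Yet $\hat\rho_1\equiv 1$ as soon as every $n_k\ge 2$, and every term of your $\psi$ vanishes:
\begin{itemize}
\item the $p_k$- and $\Delta_F$-derivatives of $g$ carry the factor $\Delta_{k,F}=0$;
\item $h_{1,k}\equiv 0$ on class $k$;
\item the $Y$-terms cancel because $G_k=G$.
\end{itemize}
So $\sigma_1^2=0$, and $\sqrt n\{(\hat\rho_1-\hat\rho_2)-(\rho_1-\rho_2)\}=-\sqrt n\,\hat\rho_2\to 0$ in probability. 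Here $\hat\rho_2$ is a degenerate, non-Gaussian $O_P(n^{-1})$ statistic, so the displayed $\mathcal N(0,1)$ limit is false.

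Hence $\sigma_1^2>0$ must be added as a hypothesis. Your fallback is the correct fix and should be the main line, with the claimed derivation from $\rho_1>0$ removed. The paper's proof has the same hole. From $\tilde h^{10}_{kl}\equiv 0$ it infers $\Delta_G(\Delta_{k,F}-\Delta_{kl,F})=\Delta_F(\Delta_{k,G}-\Delta_{kl,G})$. But $\tilde h^{10}_{kl}$ is centered, so its vanishing forces nothing about these constants. In the example above, all $\sigma^2_{kl}=0$ although $\rho_1\neq\rho_2$.
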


The consistency of the test is established as a consequence of Theorem \ref{wilkrho3}. 
\begin{corollary}\label{power2}
For any alternative $\mathcal{H}_1$, under conditions \textbf{C}1- \textbf{C}3, as $n \to \infty$,
$$P_{\mathcal{H}_1}\left(\hat{\rho}_1-\hat{\rho}_2   > Z_{\alpha}  \sqrt{\hat{M}} \right) \to 1.$$
\end{corollary}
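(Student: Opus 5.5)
The plan is to derive consistency directly from Theorem~\ref{wilkrho3}, combined with the consistency of the jackknife variance estimator. Write $\delta=\rho_1-\rho_2>0$ under $\mathcal{H}_1$ and $\hat\delta_n=\hat\rho_1-\hat\rho_2$.

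First, note that the asymptotic variance in Theorem~\ref{wilkrho3} comes from a Hoeffding/delta-method linearization of the U-statistic ratios. It therefore has the form $\sigma_1^2=v/n$ for a finite constant $v$, which is finite by \textbf{C}1. Hence
\[
\hat\delta_n-\delta=O_P(n^{-1/2}),
\]
and in particular $\hat\delta_n\to\delta$ in probability.

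Second, I would show that $n\widehat{M}\to v$ in probability, or at least $n\widehat M=O_P(1)$; either gives $\sqrt{\widehat M}=O_P(n^{-1/2})$. For this I would invoke the standard consistency of the jackknife variance estimator for smooth functions of U-statistics (Shao and Tu, \cite{Shao1996}), as suggested in \cite{Dang2021}. The ingredients are:
\begin{itemize}
\item $\hat\rho_1$ and $\hat\rho_2$ are smooth functions $g$ of finitely many U-statistics: the overall GMDs, and the within-class GMDs weighted by $\hat p_k$.
\item The kernels $\|x-x'\|$ have finite second moments under \textbf{C}1.
\item $g$ is continuously differentiable near the true parameters, because $\Delta_F,\Delta_G>0$ by non-degeneracy of $H$, and $\hat p_k\to p_k>0$ by \textbf{C}2.
\end{itemize}
The one delicate point is that deleting an observation changes the class sizes $n_k$. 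I would handle this by treating $(X_i,Y_i,Z_i)$ as i.i.d.\ triples and writing each within-class U-statistic as $\binom{n}{2}^{-1}\sum_{i<j}\|X_i-X_j\|\mathbb{1}(Z_i=Z_j=L_k)$ divided by the U-statistic estimate of $p_k^2$. This makes everything a smooth function of ordinary one-sample U-statistics, to which the jackknife theory applies directly. This step is the main obstacle. If the full consistency argument is burdensome, I would settle for showing $n\widehat M=O_P(1)$ via the Efron--Stein bound $\E\widehat M\lesssim \mathrm{var}(\hat\delta_n)$, plus a Taylor remainder control. That weaker bound suffices here.

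Finally, I would combine the two pieces:
\[
P\Big(\hat\delta_n>Z_\alpha\sqrt{\widehat M}\Big)=P\Big(\sqrt n(\hat\delta_n-\delta)>Z_\alpha\sqrt{n\widehat M}-\sqrt n\,\delta\Big).
\]
The left-hand side inside the probability is $O_P(1)$ by Theorem~\ref{wilkrho3}. On the right-hand side, $Z_\alpha\sqrt{n\widehat M}=O_P(1)$ while $\sqrt n\,\delta\to\infty$, so the threshold tends to $-\infty$ in probability. Hence the probability tends to $1$.

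Condition \textbf{C}4 is not needed here: it is only used under $\mathcal{H}_0$ to guarantee a nondegenerate limit. Even if $v=0$, the argument goes through, since only $O_P$ bounds are used.
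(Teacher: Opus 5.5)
Your argument is correct and follows essentially the paper's route. The paper simply declares the corollary a consequence of Theorem~\ref{wilkrho3}, tacitly assuming that $n\widehat M$ consistently estimates the asymptotic variance. Your rewriting of each within-class U-statistic as a ratio of one-sample U-statistics, which lets the delete-one jackknife fit standard smooth-function-of-U-statistics theory, supplies exactly that unstated step.

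One correction: drop the Efron--Stein fallback. That inequality bounds $\E\widehat M$ from below (the jackknife variance estimate is conservative), not from above. In any case, consistency of the test only needs $\hat\rho_1-\hat\rho_2\to\rho_1-\rho_2>0$ in probability and $\widehat M=o_P(1)$.
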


\begin{remark}\label{remark:1}
From Theorem \ref{wilkrho2}, the proposed test statistic admits a null asymptotic normal distribution provided that the two predictors are not perfectly linearly correlated. In particular, this condition encompasses the case where $X$ and $Y$ are independent.
\end{remark}

\begin{remark}\label{remark:2}
In classification tasks, one typically predicts \(Z\) using both \(X\) and \(Y\). A more relevant question is whether the variables in \(Y\) provide additional useful information for classification beyond \(X\). This can be formally assessed using the proposed approach as follows:
\begin{enumerate}
    \item Define $\rho_1 = \rho_g(W, Z)$ and $\rho_2 = \rho_g(X, Z)$, where $W = \begin{pmatrix} X \\ Y \end{pmatrix} \in \mathbb{R}^{p+q}$.
    \item Apply Theorem~\ref{wilkrho2} to test the null hypothesis $\mathcal{H}_0: \rho_1 = \rho_2$ against the one-sided alternative $\mathcal{H}_a: \rho_1 > \rho_2$.
\item A small $p$-value provides evidence that including $Y$ offers additional useful information for the classification task.
\end{enumerate}
\end{remark}

%

\subsection{Nonparametric Bootstrap Procedure}\label{sec:boot}
In this section, we propose a nonparametric bootstrap method to compare the importance 
of covariates with respect to the same categorical response, without requiring any 
assumptions about dependence or independence between the covariates.

\begin{enumerate}[Step 1:]
\item For the $n$ i.i.d.\ observations
$
{\cal W}=\left\{\left(\begin{pmatrix}X_1\\Y_1\end{pmatrix},Z_1\right),
\left(\begin{pmatrix}X_2\\Y_2\end{pmatrix},Z_2\right),
\ldots,
\left(\begin{pmatrix}X_n\\Y_n\end{pmatrix},Z_n\right)\right\},
$
calculate the observed difference: $d_0=\hat{\rho}_1-\hat{\rho}_2$.

\item For $b = 1, \ldots, B$, generate bootstrap samples by resampling with replacement 
\emph{within each class separately}: for each $k = 1, \ldots, K$, draw a bootstrap 
sample $\mathcal{W}_k^{(b)}$ of size $n_k$ with replacement from the original 
class-specific data $\mathcal{W}_k$. For every $b$, calculate $\hat{\rho}_{1,b}$, 
$\hat{\rho}_{2,b}$, and the difference $u_b = \hat{\rho}_{1,b} - \hat{\rho}_{2,b}$, 
$b = 1, \ldots, B$.

\item Calculate the test statistic $d_b = u_b - \bar{u}_B$, $b = 1, \ldots, B$, 
where $\bar{u}_B = \dfrac{\sum_{b=1}^B u_b}{B}$.

\item Estimate the p-value as $\text{p-val} = \dfrac{1}{B} \sum_{b=1}^B I(d_b \geq 
d_{0})$ where $I(\cdot)$ is the indicator function.
\end{enumerate}

\begin{remark}\label{remark:3}
It should be noted that since resampling is performed within each class separately, 
every bootstrap replication is guaranteed to contain exactly $n_k \geq 1$ observations 
from class $L_k$, eliminating the possibility of empty categories.
\end{remark}

 \section{Simulation Studies}\label{sec:simulationstudy}
 
In this section, we present extensive simulation studies to assess the empirical performance of our proposed inference methods.  These simulations are designed to evaluate both Type I error control and statistical power under a range of data-generating scenarios. To the best of our knowledge, no existing methods are available for comparing predictor significance via correlation in classification. Therefore, we focus exclusively on evaluating the proposed methods.

In Examples \ref{exam:independent} and \ref{exam:dependent}, we consider settings in which the numerical predictors are either univariate or multivariate, while the categorical response variable has three levels ($K=3$). 
We investigate three different configuration of class sizes: balanced \( \bi n =(n_1, n_2, n_3)= (40, 40, 40) \), slightly unbalanced  \( \bi n = (50, 40, 30) \), and heavily unbalanced \( \bi n = (72, 36, 12) \). 
For each test, we conduct $3,000$ simulations and report the Type I error and power at a significance level of $\alpha=0.05$.
In Example \ref{exam:independent}, the predictors are independent while Example \ref{exam:dependent} addresses the cases where the predictors are dependent.

In Example \ref{exam:blrm}, we extend our evaluation to a binary logistic regression model. In part (a),  we apply our proposed tests to assess the significance of predictors with respect to a binary outcome. And we also assess whether augmenting a
baseline predictor group with additional variables provides incremental
marginal information for a classification task in part (b). 
$p$-values are computed based on $3,000$ simulation replications for each scenario. 

Across all examples,  we evaluate the performance of 
 \begin{labeling}[~-]{Second}
\item[\textbf{asN}] method based on the limiting distribution given in (\ref{wilkrho3});
\item[\textbf{Bootstrap}] bootstrap method in section (\ref{sec:boot}) with $B=1000$.
\end{labeling}
These evaluations allow us to identify the strengths and limitations of each method under varying distributional assumptions, sample sizes, and dependency structures.

\begin{example}\label{exam:independent}(Independent predictors)
Define 
\begin{align*}
& \bi \Sigma_{p \times p} =( \Sigma_{ij} )\in \mathbb{R}^{p \times p}, \quad \bi \Sigma_{q \times q} = (\Sigma_{ij}) \in \mathbb{R}^{q \times q} \ \text{with} \ \ \Sigma_{ij} = 0.7^{|i-j|};\\
&\bi \beta_p=(\beta,..., \beta)^T_p \ \ \text{where} \ \  \beta \in [0, 1] \ \text{represents the difference in distribution.}
\end{align*}
We generate samples of $X \in \mathbb{R}^p$ and $Y \in \mathbb{R}^q$ independently from mixture of normal and exponential distributions as below:
\begin{enumerate}[{\bf  (a).}]
 \item 
 \begin{align*}
& X_1 \sim N(\bi 2_p,\bi \Sigma_{p \times p}), X_2 \sim N\big(\bi 3_p, \bi \Sigma_{p \times p}\big)+\bi \beta_p, X_3  \sim N\big(\bi 4_p, \bi \Sigma_{p \times p}\big)+2\bi \beta_p;\\
 &Y_1 \sim N(\bi 0_q,\bi \Sigma_{q \times q}), Y_2 \sim N\big(\bi 1_q, \bi \Sigma_{q \times q}\big), Y_3  \sim N\big(\bi 2_q, \bi \Sigma_{q \times q}\big),
 \end{align*}
 with $(p,q)=(1,1), (5,5), (5, 10)$.
\item  
\begin{align*}
&X_1 \sim  \bi \Sigma^{1/2}_{p \times p}\exp(\bi 1)+\bi \beta_p, X_2 \sim  \bi \Sigma^{1/2}_{p \times p} \exp(\bi 2)+\bi \beta_p, X_3  \sim  \bi \Sigma^{1/2}_{p \times p}\exp(\bi 4),\\
&Y_1 \sim  \bi \Sigma^{1/2}_{p \times p}  \exp(\bi 1), Y_2 \sim  \bi \Sigma^{1/2}_{p \times p} \exp(\bi 2), Y_3  \sim  \bi \Sigma^{1/2}_{p \times p} \exp(\bi 4),
\end{align*}
with $(p,q)=(1,1), (5,5), (10, 10)$.
\end{enumerate}

Here \( \beta \in [0, 1] \) where with step size 0.2 
represents the differences in means of  the 3 populations of $X$. 
When $\beta=0$, the distributions of $X$ and $Y$ differ only in location. Both share the same relationship with the categorical variable.  Consequently, $\rho_1=\rho_2$. 
As $\beta$ increases, the conditional distributions $G_k, k=1,2,3$, of  $Y$ given $Z=L_k$ remain unchanged, while the differences among the conditional distributions of $X$ increase. This implies that the association between $X$ and $Z$ becomes stronger than that between $Y$ and $Z$.  
\end{example}


Type I error and power for each scenario are computed and reported in Tables \ref{tab:k3norm} and \ref{tab:k3exp} for Case a and Case b, respectively. 
We also compute the average value of $\hat{\rho}_1-\hat{\rho}_2$ in parentheses. The column  $\beta=0.0$ corresponds to the Type I error, while the other columns represent the statistical power. 

As expected,  the average value of $\hat{\rho}_1-\hat{\rho}_2$ increases as $\beta$ increases. Both methods perform similarly, effectively controlling the Type I error and demonstrating satisfactory power across all cases. The power of both methods increases as the difference in Gini correlations grows. 
Both methods are robust in unbalanced scenarios where sample sizes differ significantly (e.g. \bi n=(72, 36, 12)). They exhibit great power when $(p,q)=(5,5)$ compared to $(p,q)=(1,1)$.
In the case of exponential mixtures, both methods continue to control the Type I error effectively and maintain satisfactory power.  Additionally, they prove to be more effective in higher-dimensional settings.

\begin{table}[] 
\centering
\renewcommand{\arraystretch}{1.2}
\begin{tabular}{l  cc|c c c c c c} \hline\hline
 $(p, q)$ & $\bi n$ & method & $\beta=0.0$ &$\beta=0.2$ &$\beta=0.4$ &$\beta=0.6$ &$\beta=0.8$ &$\beta=1.0$ \\ \hline
 \multirow{9}*{ (1, 1)}&& &(.0001)&(.0617)&(.1197)&(.1710)&(.2172)&(.2563)\\
 & \multirow{2}*{(40,40,40)}& asN &.0543 &.2670&.6430&.8993&.9860&.9983\\
&&Bootstrap& .0603&.2853&.6627&.9080&.9880&.9980\\\cline{2-9}
&& &(-.0012)&(.0595)&(.1195)&(.1679)&(.2131)&(.2528)\\
& \multirow{2}*{(50,40,30)}& asN &.0473 &.2603&.6443&.8923&.9813&.9983\\
&&Bootstrap& .0523&.2740&.6580&.9033&.9840&.9987\\\cline{2-9}
&& &(.0006)&(.0507)&(.1007)&(.1463)&(.1884)&(.2312)\\
& \multirow{2}*{(72,36,12)}& asN &.0537 &.2167&.5283&.8033&.9447&.9910\\
&&Bootstrap& .0600&.2440&.5607&.8243&.9537&.9923\\\hline

 \multirow{9}*{ (5, 5)}&& &(-.0000)&(.0568)&(.1114)&(.1602)&(.2050)&(.2436)\\
 & \multirow{2}*{(40,40,40)}& asN &.0530 &.3970&.8687&.9910&.9997&1.000\\
&&Bootstrap& .0577&.4187&.8777&.9943&.9997&1.000\\\cline{2-9}
&& &(.0006)&(.0554)&(.1091)&(.1577)&(.2015)&(.2393)\\
& \multirow{2}*{(50,40,30)}& asN &.0530 &.3850&.8477&.9900&.9997&1.000\\
&&Bootstrap& .0583&.4000&.8583&.9920&.9997&1.000\\\cline{2-9}
&& &(.0003)&(.0472)&(.0914)&(.1340)&(.1751)&(.2135)\\
& \multirow{2}*{(72,36,12)}& asN &.0500 &.3210&.7347&.9490&.9967&1.000\\
&&Bootstrap& .0590&.3517&.7643&.9577&.9973&1.000\\\hline

 \multirow{9}*{ (5, 10)}&& &(.0014)&(.0601)&(.1117)&(.1618)&(.2061)&(.2459)\\
 & \multirow{2}*{(40,40,40)}& asN &.0543 &.4903&.9180&.9977&1.000&1.000\\
&&Bootstrap& .0623&.5140&.9237&.9980&1.000&1.000\\\cline{2-9}
&& &(.0003)&(.0576)&(.1101)&(.1594)&(.2023)&(.2424)\\
& \multirow{2}*{(50,40,30)}& asN &.0530 &.4647&.9100&.9973&1.000&1.000\\
&&Bootstrap& .0580&.4887&.9177&.9977&1.000&1.000\\\cline{2-9}
&& &(.0020)&(.0464)&(.0916)&(.1336)&(.1754)&(.2129)\\
& \multirow{2}*{(72,36,12)}& asN &.0543 &.3657&.8160&.9797&.9993&1.000\\
&&Bootstrap& .0627&.3937&.8383&.9830&.9993&1.000\\\hline

  \hline \hline 
\end{tabular}
\caption{Size and Power of tests for Example \ref{exam:independent} (a). }
\label{tab:k3norm}
\end{table}

\begin{table}[] 
\centering
\renewcommand{\arraystretch}{1.2}
\begin{tabular}{l  cc|c c c c c c} \hline\hline
 $(p, q)$ & $\bi n$ & method & $\beta=0.0$ &$\beta=0.2$ &$\beta=0.4$ &$\beta=0.6$ &$\beta=0.8$ &$\beta=1.0$ \\ \hline
 \multirow{9}*{ (1, 1)}&& &(.0006)&(.0712)&(.1469)&(.2146)&(.2753)&(.3292)\\
 & \multirow{2}*{(40,40,40)}& asN &.0473 &.3827&.8533&.9840&.9993&1.000\\
&&Bootstrap& .0473&.4043&.8697&.9887&.9993&1.000\\\cline{2-9}
&& &(-.0014)&(.0551)&(.1209)&(.1784)&(.2322)&(.2789)\\
& \multirow{2}*{(50,40,30)}& asN &.0413 &.3077&.7910&.9653&.9977&1.000\\
&&Bootstrap& .0480&.3383&.8213&.9753&.9997&1.000\\\cline{2-9}
&& &(-.0005)&(.0240)&(.0542)&(.0836)&(.1143)&(.1401)\\
& \multirow{2}*{(72,36,12)}& asN &.0420 &.1540&.3913&.6293&.8447&.9367\\
&&Bootstrap& .0520&.1883&.4727&.7460&.9207&.9747\\\hline

 \multirow{9}*{ (5, 5)}&& &(-.0001)&(.0505)&(.0988)&(.1436)&(.1845)&(.2201)\\
 & \multirow{2}*{(40,40,40)}& asN &.0500 &.3550&.8120&.9863&1.000&1.000\\
&&Bootstrap& .0550&.3757&.8320&.9897&1.000&1.000\\\cline{2-9}
&& &(-.0001)&(.0417)&(.0840)&(.1205)&(.1569)&(.1881)\\
& \multirow{2}*{(50,40,30)}& asN &.0527 &.2907&.7267&.9550&.9980&1.000\\
&&Bootstrap& .0583&.3120&.7563&.9680&.9983&1.000\\\cline{2-9}
&& &(-.0000)&(.0203)&(.0398)&(.0604)&(.0808)&(.0982)\\
& \multirow{2}*{(72,36,12)}& asN &.0500 &.1493&.3103&.5390&.7643&.8746\\
&&Bootstrap& .0560&.1697&.3560&.6210&.8440&.9380\\\hline

 \multirow{9}*{ (10, 10)}&& &(-.0000)&(.0451)&(.0846)&(.1231)&(.1574)&(.1888)\\
 & \multirow{2}*{(40,40,40)}& asN &.0507 &.4270&.8883&.9963&1.000&1.000\\
&&Bootstrap& .0553&.4530&.9007&.9983&1.000&1.000\\\cline{2-9}
&& &(.0003)&(.0357)&(.0715)&(.1043)&(.1342)&(.1623)\\
& \multirow{2}*{(50,40,30)}& asN &.0480 &.3400&.8053&.9810&.9997&1.000\\
&&Bootstrap& .0520&.3587&.8283&.9850&1.000&1.000\\\cline{2-9}
&& &(-.0002)&(.0182)&(.0350)&(.0532)&(.0712)&(.0863)\\
& \multirow{2}*{(72,36,12)}& asN &.0470 &.1563&.3487&.5957&.8170&.9197\\
&&Bootstrap& .0553&.1770&.3960&.6730&.8853&.9683\\\hline

  \hline \hline 
\end{tabular}
\caption{Size and Power of tests for Example \ref{exam:independent}(b). }
\label{tab:k3exp}
\end{table}


\begin{example}\label{exam:dependent} (Dependent predictors)
\begin{enumerate}[{\bf  (a).}]
\item  Let 
\begin{align*}
&\boldsymbol{\mu}_1 = (\mathbf{2}_{p}, \mathbf{0}_{q}), \  
\boldsymbol{\mu}_2 = (\mathbf{3}_{p}+\bi \beta_p, \mathbf{1}_{q}), \ 
\boldsymbol{\mu}_3 = (\mathbf{4}_{p}+2\bi \beta_p, \mathbf{2}_{q});\\
& \boldsymbol{\Sigma} = (\Sigma_{ij}) \in \mathbb{R}^{(p+q) \times (p+q)}, \text{ where } \Sigma_{ij} = 0.7^{|i-j|}.
\end{align*}
where $\beta \in [0,1]$ with step size 0.2 denotes  for the difference between means.
Generate samples of 
\[
\begin{pmatrix} X \\ Y \end{pmatrix}^{(1)} \sim \mathcal{N}_{p+q}(\boldsymbol{\mu}_1, \boldsymbol{\Sigma}), \quad
\begin{pmatrix} X \\ Y \end{pmatrix}^{(2)} \sim \mathcal{N}_{p+q}(\boldsymbol{\mu}_2, \boldsymbol{\Sigma}), \quad
\begin{pmatrix} X \\ Y \end{pmatrix}^{(3)} \sim \mathcal{N}_{p+q}(\boldsymbol{\mu}_3, \boldsymbol{\Sigma}).
\]
By the structure of the covariance matrix, $\Sigma$, the two predictors, $X$ and $Y$,  are dependent but not linearly. Therefore, Theorem \ref{wilkrho2} applies. When $\beta=0$, the distributions of $X$ and $Y$ differ only in location. Both share the same relationship with the categorical variable.  Consequently, $\rho_1=\rho_2$. 
As $\beta$ increases, the conditional distributions $G_k, k=1,2,3$, of  $Y$ given $Z=L_k$ remain unchanged, while the differences among the conditional distributions of $X$ increases. This implies that the association between $X$ and $Z$ becomes stronger  than that between $Y$ and $Z$.  We compute Type I error and power under various settings, and report the results in Table \ref{tab:indKnorm}. 
\item  Let \(\mathbf{\bi \omega}_1 = ( \omega_{11},  \omega_{12})^T\), where   $\omega_{11},  \omega_{12}$ are i.i.d. from \(\text{Exp}(1)\);
 
  \(\mathbf{\bi \omega}_2 = ( \omega_{21},  \omega_{22})^T\), where   $\omega_{21},  \omega_{22}$ are  i.i.d. from \(\text{Exp}(2)\);
  
   \(\mathbf{\bi \omega}_3 = ( \omega_{31},  \omega_{32})^T\), where $\omega_{31},  \omega_{32}$
   are i.i.d. from \(\text{Exp}(4)\).
   
Generate samples as follows:
\[
\begin{pmatrix} X \\ Y \end{pmatrix}^{(1)} \sim \Sigma^{1/2} \bi{ \omega}_1+( \beta,  0), \quad
\begin{pmatrix} X \\ Y \end{pmatrix}^{(2)} \sim \Sigma^{1/2} \bi{ \omega}_2+(\beta,  0), \quad
\begin{pmatrix} X \\ Y \end{pmatrix}^{(3)} \sim \Sigma^{1/2} \bi{ \omega}_3.
\]
\end{enumerate}
\end{example}
The empirical values are summarized in Figure \ref{fig:Rexp}.

\begin{table}[] 
\centering
\renewcommand{\arraystretch}{1.2}
\begin{tabular}{l  cc|c c c c c c} \hline\hline
 $(p, q)$ & $\bi n$ & method & $\beta=0.0$ &$\beta=0.2$ &$\beta=0.4$ &$\beta=0.6$ &$\beta=0.8$ &$\beta=1.0$ \\ \hline 
 \multirow{9}*{ (1, 1)}& &&(-.0004)&(.0620)&(.1190)&(.1714)&(.2176)&(.2571)\\
 & \multirow{2}*{(40,40,40)}& asN &.0517 &.5103&.9430&.9973&1.000&1.000\\
&&Bootstrap& .0553&.5223&.9477&.9973&1.000&1.000\\\cline{2-9}
&& &(.0009)&(.0611)&(.1171)&(.1683)&(.2149)&(.2555)\\
&\multirow{2}*{(50,40,30)}& asN &.0533 &.5107&.9447&.9987&1.000&1.000\\
&&Bootstrap& .0570&.5230&.9477&.9987&1.000&1.000\\\cline{2-9}
&& &(.0001)&(.0504)&(.1002)&(.1455)&(.1886)&(.2300)\\
& \multirow{2}*{(72,36,12)}& asN &.0463&.3907&.8773&.9907&.9990&1.000\\
&&Bootstrap& .0533&.4233&.8923&.9930&.9990&1.000\\\hline

\multirow{9}*{ (5, 5)}&& &(.00004)&(.0570)&(.1107)&(.1599)&(.2032)&(.2434)\\
 & \multirow{2}*{(40,40,40)}& asN &.0510 &.5050&.9430&.9997&1.000&1.000\\
&&Bootstrap& .0570&.5217&.9503&.9997&1.000&1.000\\\cline{2-9}
&& &(-.0011)&(0.0545)&(0.1076)&(0.1565)&(0.2008)&(0.2397)\\
& \multirow{2}*{(50,40,30)}& asN &.0447&.4780 &.9320 &.9990 &1.000&1.000\\
&&Bootstrap& .0480 &.4997&.9393 &.9990 &1.000&1.000\\\cline{2-9}
&& &(-.0001)&(.0449)&(.0917)&(.1339)&(.1740)&(.2119)\\
& \multirow{2}*{(72,36,12)}& asN &.0490&.3867&.8623&.9877&1.000&1.000\\
&&Bootstrap& .0563&.4133&.8803&.9907&1.000&1.000\\\hline

\multirow{9}*{ (5, 10)}&& &(.0017)&(.0593)&(.1114)&(.1605)&(.2061)&(.2461)\\
 & \multirow{2}*{(40,40,40)}& asN &.0600 &.5697&.9640&1.000&1.000&1.000\\
&&Bootstrap& .0960&.6790&.9793&1.000&1.000&1.000\\\cline{2-9}
&& &(.0010)&(.0573)&(.1091)&(.1583)&(.2028)&(.2414)\\
& \multirow{2}*{(50,40,30)}& asN &.0500 &.5357 &.9557&1.000&1.000&1.000\\
&&Bootstrap& .0910&.6553&.9717&1.000&1.000&1.000\\\cline{2-9}
&& &(-.0001)&(.0468)&(.0911)&(.1352)&(.1752)&(.2133)\\
& \multirow{2}*{(72,36,12)}& asN &.0493&.4233 &.8793 &.9940 &1.000&1.000\\
&&Bootstrap& .0890&.5557 &.9353 &.9973 &1.000&1.000\\\hline
  \hline \hline 
\end{tabular}
\caption{Size and Power of tests for Example \ref{exam:dependent}(a). }
\label{tab:indKnorm}
\end{table}

\begin{figure}[]
\centering
\begin{tabular}{cc}
\includegraphics[width=0.7\textwidth]{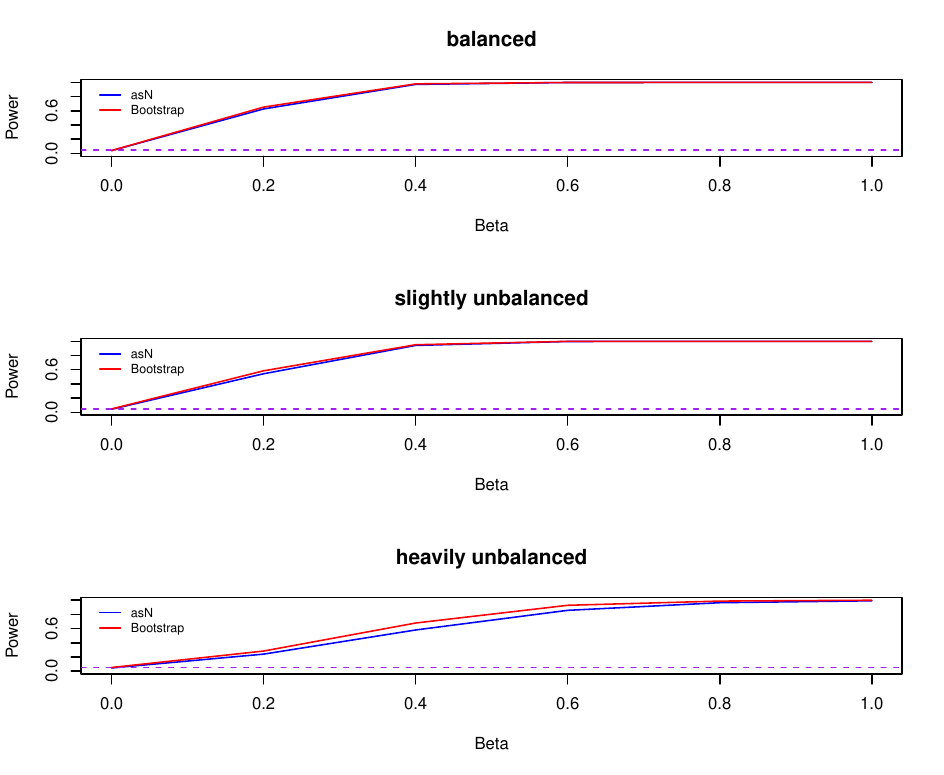}
\end{tabular} 
\caption{Size and power of tests in Example \ref{exam:dependent}(b). Dashed horizontal line is the nominal level 0.05.}
\label{fig:Rexp}
\end{figure}

From Table \ref{tab:indKnorm} and Figure \ref{fig:Rexp}, we observe that  the proposed asN and Bootstrap methods exhibit comparable performance across the various simulation settings when the predictors are dependent. Both tests effectively control the Type I error at the nominal level, demonstrating good size accuracy. 
As the magnitude of the correlation differences increases, the powers of both methods improves markedly in all scenarios.
Notably, under conditions of heavily unbalanced sample sizes ($\bi n=(72, 36, 12)$), the Bootstrap method slightly outperforms the asN approach in terms of power. This modest advantage is consistent with the results from the independent predictor setting examined in Example \ref{exam:independent}, suggesting that the resampling-based method offers greater robustness in the presence of  unequal class sizes in practice.

\begin{example}[Binary Logistic Regression Model]\label{exam:blrm}
In this example, we consider a logistic regression model where the binary response is generated as
\begin{align*}
\log\left\{ \frac{P(Z=1| V)}{P(Z=-1| V)}\right\} = -3 +2 V_1+2V_2+2V_3+3\sin(V_4) +4V_5^2,
\end{align*}
where $V \sim N(\bi 0, \bi \Sigma)$ with $\bi \Sigma =(\rho_{ij})_{p \times p}$ having two scenarios $\rho_{ij} =0$ and  $\rho_{ij} =0.5^{|j-i|}, i \neq j$. 
This model has been applied in He,  Ma and Xu \cite{He19}, Mai and Zou \cite{Mai13}, and Sang and Dang \cite{Sang2024} for ultra-high dimensional feature screening.
Here, we consider the regular setting where the dimensionality of $V$ is $10$. That is,  $V= (V_1, V_2, \cdots, V_{10})^T$.
We let  $X=(V_1, V_2, V_3, V_4, V_5)^T$ be  the five active predictors. In order to evaluate the proposed tests, we choose $Y$ from the non-active predictors.  
\end{example}
We consider two inferential questions in this example. 
\begin{enumerate}[\bf (a).]
\item The first question is to assess whether predictor group $X$ exhibits stronger marginal association with the response than $Y$, that is, to evaluate (\ref{test:H0}). 
We report  $p$-values for asN and Bootstrap tests at two different sample sizes $n=60$ and $n=100$, based on $3,000$ simulation repetitions for each case at significance level $\alpha=0.05$ in Table \ref{tab:blrm}.  The upper portion of the table corresponds to the setting where predictors $X$ and $Y$ are independent, while the lower portion reflects the case where they are dependent. Here, $X$ denotes the set of active predictors, and $Y$ consists of non-active predictors with respect to the binary response variable, satisfying $\rho_1>\rho_2$.
  The results suggest that the dependence structure among the predictors enhances the ability of the tests to correctly identify active predictors. In contrast, when the predictors are independent, as sample size of $n=60$ appears insufficient to achieve reliable detection. This is also discussed in  Example 3.2 of \cite{Sang2024}.

 \begin{table}[h] 
\center

\renewcommand{\arraystretch}{1.1}
\begin{tabular}{ccccccc}
\hline \hline

 $\bi \Sigma =(\rho_{ij})_{p \times p}$&  $X$ 	&  $Y $ &  $ n $&$\hat{\rho}_1-\hat{\rho}_2$ & Bootstrap&asN  \\ \hline  



\multirow{6}*{$\rho_{ij}=0, i \neq j$}&\multirow{6}*{$(V_1, V_2, V_3, V_4, V_5)^T$} &\multirow{2}*{$(V_6, V_7, V_8, V_9, V_{10})^T$}&60&.0361&.0479&.0390\\  
&&&100&.0359&.0084&.0052\\ \cline{3-7}
&&\multirow{2}*{$(V_6, V_7, V_8)^T$}&60&.0357&.0659&.0576\\  
&&&100&.0357&.0133&.0106\\ \cline{3-7}
&&\multirow{2}*{$V_6$}&60&.0358&.1180&.1066\\  
&&&100&.0363&.0402&.0396\\ \cline{1-7}

\multirow{6}*{$\rho_{ij} =0.5^{|j-i|}, i \neq j$}&\multirow{6}*{$(V_1, V_2, V_3, V_4, V_5)^T$} &\multirow{2}*{$(V_6, V_7, V_8, V_9, V_{10})^T$}&60&.0872&.0066&.0050\\  
&&&100&.0884&.0004&.0002\\ \cline{3-7}
&&\multirow{2}*{$(V_6, V_7, V_8)^T$}&60&.0868&.0085&.0075\\  
&&&100&.0875&.0007&.0005\\ \cline{3-7}
&&\multirow{2}*{$V_6$}&60&.0849&.0242&.0252\\  
&&&100&.0851&.0038&.0044\\ \cline{1-7}

\end{tabular}
\caption{$p$-values of the proposed  tests for Example \ref{exam:blrm} (a).}
\label{tab:blrm}
\end{table}

\item The second question is to examine whether adding predictor $Y$ to the baseline group $X$ provides additional information, as discussed in Remark \ref{remark:2}. Let $W = \begin{pmatrix} X \\ Y \end{pmatrix}$, the hypothesis of interest for this question is 
\[
 \mathcal{H}_0:\ \rho_g(W,Z) = \rho_g(X,Z)
\quad \text{versus} \quad
 \mathcal{H}_1:\ \rho_g(W,Z) > \rho_g(X,Z).
\] 
$p$-values for the asN and Bootstrap tests under the same setting as in part (a) are presented  in Table \ref{tab:blrm2}.  The uniformly large $p$-values indicate that $Y$ provides no 
additional marginal information beyond $X$. This is expected, as  $Y$ consists of truly non-active predictors that are 
unrelated to the response. 

In addition, in multinomial regression, a related but distinct
question is typically examined via a likelihood ratio test (MLRT) comparing models $Z\sim X$ and
$Z\sim X+Y$, which evaluates conditional improvement in model fit under a
parametric framework. In contrast, our procedure assesses the added value of $Y$
through changes in categorical Gini correlation in a model-free manner. Nevertheless, for this second research question, we also report likelihood-based 
results to provide practical context and to illustrate how the proposed model-free 
approach relates to commonly used regression-based methods, rather than to directly 
compare their performance. Notably, the MLRT and CGC-based methods are 
in agreement here, with MLRT $p$-values hovering around 0.4--0.5 across all settings. 

%
 \begin{table}[H] 
\center

\renewcommand{\arraystretch}{1.1}
\begin{tabular}{ccccccc}
\hline \hline

 $\bi \Sigma =(\rho_{ij})_{p \times p}$&  $X$ 	&  $Y $ &  $ n $& Bootstrap&asN&MLRT  \\ \hline  



\multirow{6}*{$\rho_{ij}=0, i \neq j$}&\multirow{6}*{$(V_1, V_2, V_3, V_4, V_5)^T$} &\multirow{2}*{$(V_6, V_7, V_8, V_9, V_{10})^T$}&60&.9607&.9690&.4111\\   
&&&100&.9942&.9966&.4477\\ \cline{3-7}
&&\multirow{2}*{$(V_6, V_7, V_8)^T$}&60&.9497&.9574&.4464\\  
&&&100&.9912&.9934&.4645\\ \cline{3-7}
&&\multirow{2}*{$V_6$}&60&.9074&.9177&.4787\\  
&&&100&.9704&.9720&.4808\\ \cline{1-7}

\multirow{6}*{$\rho_{ij} =0.5^{|j-i|}, i \neq j$}&\multirow{6}*{$(V_1, V_2, V_3, V_4, V_5)^T$} &\multirow{2}*{$(V_6, V_7, V_8, V_9, V_{10})^T$}&60&.9946&.9959&.3814\\  
&&&100&.9997&.9999&.4370\\ \cline{3-7}
&&\multirow{2}*{$(V_6, V_7, V_8)^T$}&60&.9921&.9932&.4219\\  
&&&100&.9995&.9997&.4633\\ \cline{3-7}
&&\multirow{2}*{$V_6$}&60&.9747&.9760&.4580\\  
&&&100&.9964&.9965&.4916\\ \cline{1-7}

\end{tabular}
\caption{$p$-values of the proposed  tests for  Example \ref{exam:blrm} (b).}
\label{tab:blrm2}
\end{table}
\end{enumerate}

  
%


\section{Real Data Analysis}\label{sec:realdata}

In this section, we illustrate the proposed CGC-based
predictor comparison procedures using three real-world datasets. The first two datasets
are related to breast cancer: one arises from genomic data, and the other from
medical diagnostics. The third dataset comes from a human activity recognition
study. In each application, two meaningful predictor groups are examined in relation
to a common categorical response variable. These groups are defined either
based on scientific considerations or through data-driven feature selection. Across all datasets, we perform both the group comparison test and the added-value test. As in Example~\ref{exam:blrm}, we additionally report multinomial $p$-values for the added-value setting.
\subsection{TCGA Breast Cancer Gene Expression Data}

The TCGA breast cancer gene expression dataset \cite{Goldman} contains expression measurements
for 17{,}278 genes from 506 patients. Each patient is classified into one of four
intrinsic molecular subtypes (luminal A, luminal B, HER2-enriched, and basal-like), which together define the categorical response variable $Z$. Among the measured genes, the PAM50 gene signature is widely recognized as a gold standard for molecular subtype characterization and prognosis. Accordingly, we define biologically meaningful predictor groups, with $X$ representing the PAM50 genes and $Y$ representing the non-PAM genes. 

We first consider an initial analysis in which $Y$ consists of 48 non-PAM genes. Table~\ref{tab:TCGA_realdata} summarizes the corresponding CGC-based inference results. In the marginal CGC comparison test, $(H_1: \rho_g(X,Z)>\rho_g(Y,Z))$, the estimated CGC difference is large and positive, and both the bootstrap and asN tests yield extremely small $p$-values ($<0.001$), indicating that the PAM50 genes provide substantially more predictive information about breast cancer subtype than the competing gene set.

\begin{table}[ht]
\centering
\renewcommand{\arraystretch}{1.3}
\resizebox{0.95\textwidth}{!}{%
\begin{tabular}{lccccccc}
\hline\hline
Test & $n$ & $K$ & CGC Difference & Bootstrap $p$ & asN $p$ & Multinomial $p$ & Interpretation \\
\hline
$\rho_g(X,Z)-\rho_g(Y,Z)$ 
& 506 & 4 & $0.1926$ & $<0.001$ & $<0.001$ & $-$ & $X$ stronger than $Y$ \\

$\rho_g(W,Z)-\rho_g(X,Z)$
& 506 & 4 & $-0.0541$ & $>0.999$ & $>0.999$ & $>0.99$ & No added marginal value \\

\hline
\multicolumn{8}{l}{\footnotesize $X$: PAM48 genes; $Y$: non-PAM48 genes.} \\
\multicolumn{8}{l}{\footnotesize Distance correlation between $X$ and $Y$: $0.58$.} \\
\hline
\end{tabular}}
\caption{CGC results for the TCGA breast cancer gene expression dataset.}
\label{tab:TCGA_realdata}
\end{table}

In contrast, the added-value CGC test for assessing additional predictive information,
$(H_1: \rho_g(W,Z)-\rho_g(X,Z))$, yields a negative estimated CGC difference, and both inference procedures produce very large $p$-values ($>0.999$). This provides no evidence that augmenting the PAM50 genes with additional non-PAM genes contributes additional predictive information for the response. The negative CGC difference can be explained by the substantial correlation between $X$ and $Y$: when $Y$ contains information already captured by $X$, combining the two predictor groups does not provide further predictive information, and the CGC estimator may even produce a slightly smaller value. These findings are further supported by the multinomial regression likelihood ratio test, which yields a larger $p$-value.

To further explore the behavior of the proposed method in higher dimensional settings, we extend the assessment of additional predictive information by increasing the dimension of the non-PAM predictor group $Y$. Specifically, we consider configurations in which $Y$ contains 100, 200, and 300 non-PAM genes, while keeping $X$ fixed as the PAM48 gene set. This setup allows us to empirically examine how the method behaves when a large number of additional predictors, many of which may be weakly associated or redundant, are included.

The results are summarized in Table~\ref{tab:TCGA_highdimY}. Across all higher dimensional configurations, the estimated CGC difference remains negative and becomes increasingly so as the dimension of $Y$ grows. Both the bootstrap and asN tests consistently yield larger p-values, indicating that the inferential conclusion is stable with respect to the dimension of the added predictor group. At the same time, the distance correlation between $X$ and $Y$ increases substantially as $\dim(Y)$ grows, reflecting stronger redundancy between the PAM genes and larger non-PAM gene panels. This reflects the fact that adding many weakly associated or redundant predictors does not increase the marginal association with $Z$ and can slightly dilute the association captured by $X$ alone. Overall, these results demonstrate that the proposed CGC-based test remains well behaved and interpretable in higher dimensional settings. Even when hundreds of additional predictors are included, the method correctly identifies that the additional variables do not contribute further marginal association beyond that captured by the PAM50 signature. This analysis highlights the practical applicability of the proposed framework to modern higher dimensional genomic data.

\begin{table}[ht]
\centering
\small
\renewcommand{\arraystretch}{1.1}
\begin{tabular}{cccc}
\hline\hline
$\dim(X)$ & $\dim(Y)$ & CGC difference & $\mathrm{dCor}(X,Y)$ \\
\hline
48 & 100 & $-0.0822$ & $0.71$ \\
48 & 200 & $-0.1100$ & $0.83$ \\
48 & 300 & $-0.1288$ & $0.85$ \\
\hline
\multicolumn{4}{l}{\footnotesize Bootstrap and asN $p$-values $>0.999$ for all configurations.} \\
\multicolumn{4}{l}{\footnotesize $X$: PAM48 genes; $Y$: non-PAM genes of increasing dimensionality; $W=(X,Y)$.} \\
\hline
\end{tabular}
\caption{CGC-based added-value analysis for the TCGA breast cancer dataset with increasing dimensionality of the added predictor group $Y$ ($n=506$).}
\label{tab:TCGA_highdimY}
\end{table}

\subsection{Wisconsin Breast Cancer Diagnostic Data}

The Wisconsin Breast Cancer (Diagnostic) dataset \cite{Wolberg1995} consists of $n=569$ observations
obtained from digitized images of fine needle aspirates of breast masses. Each
observation is classified as either benign or malignant, yielding a binary
categorical response variable $Z$. For each observation, 30 numerical predictors
are available, describing characteristics of cell nuclei. These predictors naturally
decompose into feature groups according to measurement type: for each of ten basic
characteristics (such as radius, texture, and perimeter), the dataset provides the
mean value, the standard error (SE), and the worst (largest) value.

To identify the most informative predictors, we first computed Random Forest
feature importance across all 30 predictors. As shown in Figure~\ref{fig:WISC_top15}, the top 10 features, dominated by
Worst and Mean measurements of radius, perimeter, area, and concave points, were
selected as $X$, while the remaining 20 features were assigned to $Y$. This
selection allows us to test whether additional features contribute predictive
information beyond the top features.

Table~\ref{tab:WISC_realdata_top10} summarizes the results of CGC-based and
logisitic regression analyses. The marginal CGC comparison indicates that the top 10 features capture significantly more predictive information
than the remaining features, with both bootstrap and asN $p$-values
being extremely small ($<0.001$). The added-value CGC test shows that
including the remaining features does not improve predictive information
($p$-values $>0.999$).  A ridge-regularized logistic regression comparison
for the added-value setting also yields a large $p$-value ($p=0.2575$), indicating
that the remaining features provide no statistically significant additional predictive information beyond the top 10 predictors.

Overall, these results demonstrate that the top 10 features already saturate the
nonparametric predictive signal for malignancy, and adding remaining features provides
little added value.

\begin{table}[ht]
\centering
\renewcommand{\arraystretch}{1.3}
\resizebox{0.95\textwidth}{!}{%
\begin{tabular}{lccccccc}
\hline\hline
Test & $n$ & $K$ & CGC Difference & Bootstrap $p$ & asN $p$ & Multinomial $p$ & Interpretation \\
\hline
$\rho_g(X,Z)-\rho_g(Y,Z)$ 
& 569 & 2 & $0.1562$ & $<0.001$ & $<0.001$ & $-$ & $X$ stronger than $Y$ \\

$\rho_g(W,Z)-\rho_g(X,Z)$
& 569 & 2 & $-0.1062$ & $>0.999$ & $>0.999$ & $0.2575$ & No added marginal value \\

\hline
\multicolumn{8}{l}{\footnotesize $X$: top 10 RF features; $Y$: remaining 20 features.} \\
\multicolumn{8}{l}{\footnotesize Distance correlation between $X$ and $Y$: $0.8314$.} \\
\hline
\end{tabular}}
\caption{CGC results for the Wisconsin Breast Cancer dataset using top 10 Random Forest features.}
\label{tab:WISC_realdata_top10}
\end{table}

\begin{figure}[ht]
\centering
\includegraphics[width=0.75\textwidth]{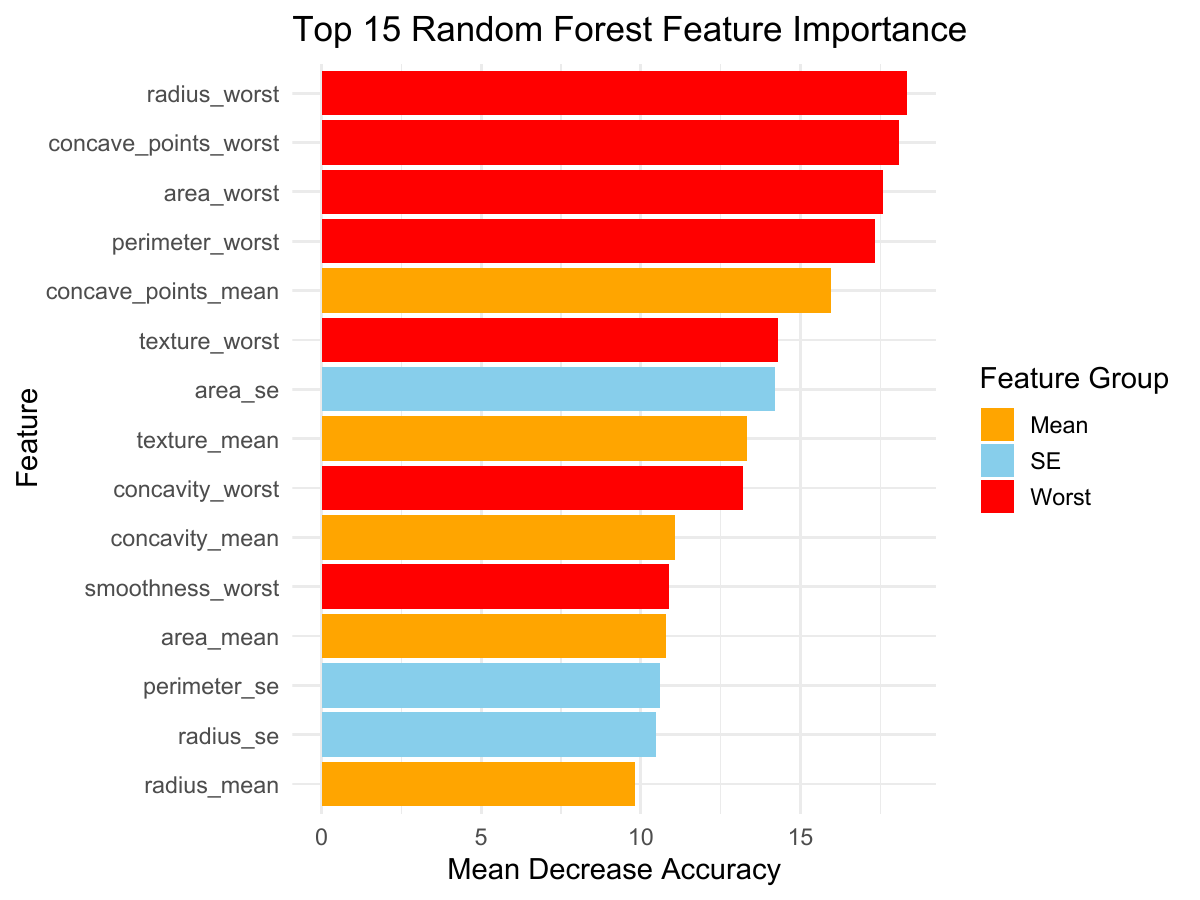}
\caption{Top 15 Random Forest feature importances for the Wisconsin Breast Cancer dataset,
colored by feature group (SE, Mean, Worst). Worst features dominate, followed by
Mean features, while SE features contribute less.}
\label{fig:WISC_top15}
\end{figure}

\subsection{Human Activity Recognition (HAR) Data}

The Human Activity Recognition (HAR) dataset \cite{HARdataset,Anguita2013} contains smartphone sensor measurements from 10{,}299 observations, each labeled as one of six physical activities. Each observation includes 561 numerical features derived from accelerometer and gyroscope signals. These features naturally form two predefined predictor groups based on sensor modality: accelerometer-related features ($X$) and gyroscope-related features ($Y$). To facilitate efficient and balanced analysis across activity classes, we apply a stratified subsampling procedure to obtain a representative subset of $n=1200$ observations.


As shown in Table~\ref{tab:HAR_realdata}, both the bootstrap and asN results ($p<0.001$) indicate that accelerometer features provide significantly stronger predictive information for activity labels than gyroscope features. Furthermore, the assessment of additional predictive value shows no evidence that combining the two sensor modalities yields further improvement in predicting the response. Both conclusions are consistent with expectations. A multinomial likelihood ratio test (MLRT), performed principal component scores also yields a large $p$-value ($0.9692$) consistent with the CGC added-value results, indicating that the gyroscope features contribute little additional predictive information beyond the accelerometer features.

These results suggest that the accelerometer group already captures the dominant predictive structure for activity classification, while the gyroscope features primarily provide redundant information.


\begin{table}[ht]
\centering
\renewcommand{\arraystretch}{1.3}
\resizebox{0.95\textwidth}{!}{%
\begin{tabular}{lccccccc}
\hline\hline
Test & $n$ & $K$ & CGC Difference & Bootstrap $p$ & asN $p$ & MLRT $p$ & Interpretation \\
\hline
$\rho_g(X,Z)-\rho_g(Y,Z)$ 
& 1200 & 6 & $0.0979$ & $<0.001$ & $<0.001$ & $-$ & $X$ stronger than $Y$ \\

$\rho_g((W),Z)-\rho_g(X,Z)$
& 1200 & 6 & $-0.0349$ & $>0.999$ & $>0.999$ & $0.9692$ & No added marginal value \\

\hline
\multicolumn{8}{l}{\footnotesize $X$: accelerometer-related features; $Y$: gyroscope-related features.} \\
\multicolumn{8}{l}{\footnotesize Distance correlation between $X$ and $Y$: $0.97$.} \\
\hline
\end{tabular}}
\caption{Results for the HAR dataset based on a stratified subsample ($n=1200$).}
\label{tab:HAR_realdata}
\end{table}

\subsection{Summary of Real Data Findings}

Across the TCGA breast cancer, Wisconsin Breast Cancer Diagnostic, and Human Activity Recognition datasets, the proposed CGC-based procedures consistently identify predictor groups that provide stronger predictive information for the categorical response, even in the presence of substantial correlation between groups. In all three applications, the comparison between predictor groups yields clear and statistically significant results that align with domain knowledge. 

For example, in the TCGA breast cancer dataset, the PAM50 gene signature emerges as the dominant predictor of molecular subtypes, confirming its established role in subtype characterization. In the Wisconsin Breast Cancer Diagnostic dataset, the top 10 Random Forest-selected features, dominated by the Worst and Mean measurements of radius, perimeter, area, and concave points, exhibit stronger association with malignancy status than the remaining features. In particular, the Worst measurements capture extreme cellular characteristics that are known to be clinically relevant, reflecting patterns aligned with tumor aggressiveness.  In the Human Activity Recognition dataset, sensor modalities capturing movement and orientation provide more predictive information than secondary sensor groups, reflecting the known relevance of these signals in activity classification. Across these datasets, both the proposed CGC-based methods and the MLRT assessment of additional predictive information show no evidence that augmenting a dominant predictor group $X$ with a highly correlated group $Y$ contributes further predictive information for the response, reinforcing the importance of identifying non-redundant predictive information.


\subsection{Computational Considerations}

The proposed methodology requires the computation of pairwise distances among $n$ observations, resulting in a computational complexity of $O(n^2)$. Specifically, the CGC statistic is a U-statistic whose kernel depends on two observations at a time, requiring evaluation over $\binom{n}{2}$ distinct pairs. This quadratic cost is intrinsic to distance-based dependence measures and is shared by related methods such as distance correlation \cite{Szekely2007}, energy statistics \cite{Ramos2023}, and kernel-based two-sample tests. In practice, this cost is manageable for moderate to large sample sizes (e.g., $n \le 10{,}000$) using standard computing resources. For very large datasets, scalability can be improved through several strategies, including class-stratified distance computations within the bootstrap procedure and the use of approximate distance techniques such as random projections or subsampling. In addition, parallel computing can be effectively utilized to distribute pairwise distance calculations across multiple cores. These approaches substantially reduce computation time while preserving the validity of the proposed CGC-based inference.

\section{Conclusions and Future work}\label{sec:conclusion}

In this paper, we introduce a hypothesis test for comparing two categorical Gini correlations of two numerical predictors with respect to a common categorical response. The paper focuses on comparing two population-level categorical Gini correlations. We develop a formal test to assess whether a significant difference exists between them. The proposed test statistic is derived, and its asymptotic distribution is studied under both the null and alternative hypotheses. In addition, we propose a nonparametric bootstrap method as an alternative approach for inference, offering a flexible procedure to assess differences in categorical Gini correlations.

Following the establishment of theoretical results for the inference methods, we conducted extensive simulation studies to verify and validate these findings, considering both univariate and multivariate cases. Simulation studies suggest that the proposed methods work effectively under both independent and dependent correlation structures. 
 After confirming the theoretical results, we applied the proposed methods in a practical scenario. The proposed procedures provide a useful framework for evaluating and comparing numerical feature groups in classification settings with categorical responses. Thanks to the correlation's ability to work with multidimensional data, the proposed methods turned out to be useful for a wide range of real-world applications.

The categorical Gini covariance has been generalized to a RKHS in \cite{Zhang2021} with a Mercer kernel.
\begin{align*} 
\mbox{gCov}( X,Z; d_\kappa) &=\E  \{ d_\kappa(X_1,X_2)\}-\sum_{k=1}^Kp_k\E \{d_\kappa(X_1^{(k)}, X_2^{(k)})\},
\end{align*}
where $d_\kappa(x_1, x_2) = \sqrt{\kappa( x_1, x_1)+\kappa( x_2, x_2)-2\kappa(x_1, x_2)}$, the distance in the feature space induced by positive definite kernel $\kappa$. 
In the future, we will compare the significance of two covariates to the same categorical response in RKHS by developing statistical inference of kernel categorical Gini correlation. 
A potential limitation of the proposed methods is that, although predictors may have arbitrary and unequal dimensions, differences in scale or distributional properties could influence the CGC values, potentially affecting the validity of the comparisons. Future research could investigate strategies to mitigate these effects and ensure more robust inference. Another natural extension is to generalize the proposed tests beyond pairwise group comparison to multiple or hierarchically structured predictor groups. Such developments would be particularly relevant for classification problems  involving nested feature representations, multi-modal data, or pathway-based groupings.

Exploring the modified estimator proposed by Sang and Dang (2024) \cite{Sang2024} offers another promising direction, as it may admit a simpler asymptotic distribution. While the present paper focuses on inference for the original CGC as defined in \cite{Dang2021}, incorporating this modified estimator could yield new theoretical insights and potentially simplify practical applications.

Finally, another important avenue for future research is to extend the proposed 
CGC-based framework by incorporating copula-based modeling. The current framework 
treats the joint distribution of predictor groups nonparametrically, whereas copulas 
provide a flexible mechanism for explicitly characterizing dependence between $X$ and 
$Y$, including nonlinear and tail dependence structures that may influence CGC-based 
inference. Such extensions are particularly relevant in biomedical applications, where 
complex dependence patterns commonly arise. Recent work on copula-based methods for 
joint tail risk in paired biomarkers \cite{Aich4}, fusion of clinical and genomic risk 
scores \cite{Aich3}, feature selection \cite{Aich1}, and imbalanced classification 
\cite{Aich2} provides a natural foundation for this direction, and suggests that 
copula-enhanced CGC inference may yield more robust and interpretable assessments of 
independent predictive contribution in complex data settings.

\section{Appendix}

\begin{proof}[Proof of Theorem \ref{wilkrho2}]

We first assume the GMDs for $F$ and $G$ are known. Define
\[
\hat{\rho}_1 = \frac{\mathrm{gCov}_n(X,Z)}{\Delta_F}, \quad 
\hat{\rho}_2 = \frac{\mathrm{gCov}_n(Y,Z)}{\Delta_G},
\]
and consider their difference
\[
\hat{D}_n = \hat{\rho}_1 - \hat{\rho}_2 = \sum_{k \neq l} \hat{p}_k \hat{p}_l \hat{D}_{kl},
\]
where
\[
\hat{D}_{kl} = \frac{1}{n_k^2 n_l^2} \sum_{i_1,i_2=1}^{n_k} \sum_{j_1,j_2=1}^{n_l} 
h\Big( (X^{(k)}_{i_1}, Y^{(k)}_{i_1}), (X^{(k)}_{i_2}, Y^{(k)}_{i_2}); 
(X^{(l)}_{j_1}, Y^{(l)}_{j_1}), (X^{(l)}_{j_2}, Y^{(l)}_{j_2}) \Big),
\]
with kernel
\begin{align*}
h_{kl} &:= \frac{1}{2 \Delta_F} \Big[ \|X^{(k)}_{i_1}-X^{(l)}_{j_1}\| + \|X^{(k)}_{i_2}-X^{(l)}_{j_2}\| - \|X^{(k)}_{i_1}-X^{(k)}_{i_2}\| - \|X^{(l)}_{j_1}-X^{(l)}_{j_2}\| \Big] \\
&\quad - \frac{1}{2 \Delta_G} \Big[ \|Y^{(k)}_{i_1}-Y^{(l)}_{j_1}\| + \|Y^{(k)}_{i_2}-Y^{(l)}_{j_2}\| - \|Y^{(k)}_{i_1}-Y^{(k)}_{i_2}\| - \|Y^{(l)}_{j_1}-Y^{(l)}_{j_2}\| \Big].
\end{align*}

Let $\tilde{h}_{kl} = h_{kl} - \mathbb{E}[h_{kl}]$ be the centered kernel, and define its first-order projections:
\begin{align*}
\tilde{h}^{10}_{kl}(x^{(k)},y^{(k)}) &= \frac{1}{2 \Delta_F} \Big[ \mathbb{E}\|x^{(k)} - X_1^{(l)}\| - \mathbb{E}\|x^{(k)} - X_2^{(k)}\| - \Delta_{kl,F} + \Delta_{k,F} \Big] \\
&\quad - \frac{1}{2 \Delta_G} \Big[ \mathbb{E}\|y^{(k)} - Y_1^{(l)}\| - \mathbb{E}\|y^{(k)} - Y_2^{(k)}\| - \Delta_{kl,G} + \Delta_{k,G} \Big], \\
\tilde{h}^{01}_{kl}(x^{(l)},y^{(l)}) &= \frac{1}{2 \Delta_F} \Big[ \mathbb{E}\|x^{(l)} - X_1^{(k)}\| - \mathbb{E}\|x^{(l)} - X_2^{(l)}\| - \Delta_{kl,F} + \Delta_{l,F} \Big] \\
&\quad - \frac{1}{2 \Delta_G} \Big[ \mathbb{E}\|y^{(l)} - Y_1^{(k)}\| - \mathbb{E}\|y^{(l)} - Y_2^{(l)}\| - \Delta_{kl,G} + \Delta_{l,G} \Big].
\end{align*}

The first-order projection of $\hat{D}_{kl}$ is
\[
\hat{D}^{(1)}_{kl} = 2 \Bigg[ \frac{1}{n_k} \sum_{i=1}^{n_k} \tilde{h}^{10}_{kl}(x^{(k)}_i, y^{(k)}_i) + \frac{1}{n_l} \sum_{i=1}^{n_l} \tilde{h}^{01}_{kl}(x^{(l)}_i, y^{(l)}_i) \Bigg],
\]
and hence
\[
\hat{D}^{(1)} = \sum_{k \neq l} \hat{p}_k \hat{p}_l \hat{D}_{kl}^{(1)} = \frac{4}{n^2} \sum_{k \neq l} n_l \sum_{i=1}^{n_k} \tilde{h}^{10}_{kl}(x^{(k)}_i, y^{(k)}_i).
\]

Under $\mathcal{H}_0$, $\mathbb{E}[\tilde{h}^{10}_{kl}] = \mathbb{E}[\tilde{h}^{01}_{kl}] = 0$, and the variance of the first-order projection is
\[
\sigma_0^2 = \operatorname{Var}(\hat{D}^{(1)}) = 16 \sum_{k \neq l} (p_k^2 p_l + p_l^2 p_k) \operatorname{Var}(\tilde{h}^{10}_{kl}(X^{(k)}, Y^{(k)})).
\]

By condition \textbf{C4}, the kernel $h_{kl}$ varies non-trivially, so $\operatorname{Var}(\tilde{h}^{10}_{kl}(X^{(k)}, Y^{(k)})) > 0$ for at least one pair $k \neq l$. Therefore, $\sigma_0^2 > 0$, guaranteeing that the U-statistic is non-degenerate. The central limit theorem for U-statistics then gives
\[
\sqrt{n} \, \hat{D}_n = \sqrt{n} \sum_{k \neq l} \hat{p}_k \hat{p}_l \hat{D}_{kl}^{(1)} + o_P(1) \stackrel{d}{\longrightarrow} \mathcal{N}(0, \sigma_0^2).
\]

When the GMDs are unknown, define
\[
\hat{\rho}_1 = \frac{\mathrm{gCov}_n(X,Z)}{\widehat{\Delta}_F}, \quad 
\hat{\rho}_2 = \frac{\mathrm{gCov}_n(Y,Z)}{\widehat{\Delta}_G}.
\]

By the almost sure convergence of U-statistics \cite{Serfling1980},
\[
\widehat{\Delta}_F = \Delta_F + o_P(1), \quad \widehat{\Delta}_G = \Delta_G + o_P(1).
\]

Then we can write
\[
\sqrt{n} \Bigg( \frac{\hat{\rho}_1 \widehat{\Delta}_F}{\Delta_F} - \frac{\hat{\rho}_2 \widehat{\Delta}_G}{\Delta_G} \Bigg)
= \sqrt{n} \Bigg( (\hat{\rho}_1 - \hat{\rho}_2) + \hat{\rho}_1 \frac{o_P(1)}{\Delta_F} - \hat{\rho}_2 \frac{o_P(1)}{\Delta_G} \Bigg),
\]
where the extra term
\[
\hat{\rho}_1 \frac{o_P(1)}{\Delta_F} - \hat{\rho}_2 \frac{o_P(1)}{\Delta_G} \to 0 \quad \text{in probability}.
\]

By Slutsky's theorem, we finally obtain
\[
\sqrt{n} (\hat{\rho}_1 - \hat{\rho}_2) \stackrel{d}{\longrightarrow} \mathcal{N}\Big(0, 16 \sum_{k \neq l} (p_k^2 p_l + p_l^2 p_k) \sigma^2_{kl} \Big),
\]
where
\[
\sigma^2_{kl} = \operatorname{Var}(\tilde{h}^{10}_{kl}(X^{(k)}, Y^{(k)})) = \operatorname{Var}(\tilde{h}^{01}_{kl}(X^{(l)}, Y^{(l)}))
\]
for at least one pair \(k \neq l\).

\end{proof}

\begin{proof}[Proof of Theorem \ref{wilkrho3}] 
Under the alternative hypothesis $\mathcal{H}_1: \rho_1 \neq \rho_2$, we use the same first-order projection representation of $\hat{D}_n$ as in Theorem~\ref{wilkrho2}: 
\[
\hat{D}^{(1)} = \sum_{k \neq l} \hat{p}_k \hat{p}_l \hat{D}_{kl}^{(1)}
= \frac{4}{n^2} \sum_{k \neq l} n_l \sum_{i=1}^{n_k} \tilde{h}^{10}_{kl}(x^{(k)}_i, y^{(k)}_i),
\]
where $\tilde{h}^{10}_{kl}$ are the first-order projections of the centered kernel functions $h_{kl}$, defined in Theorem~\ref{wilkrho2}.

We will show that at least one  $\sigma^2_{kl}=\textrm{Var}\bigg(\tilde{h}^{10}_{kl}\big( X^{(k)}, Y^{(k)}\big)\bigg)$ is not zero when $\rho_1\neq\rho_2$


If all the variances are zero, then $\tilde{h}^{10}_{kl}\big( X^{(k)}, Y^{(k)}\big)=\E \tilde{h}^{10}_{kl}\big( X^{(k)}, Y^{(k)}\big)=0$ and $\tilde{h}^{01}_{kl}\big( X^{(l)}, Y^{(l)}\big)=\E \tilde{h}^{01}_{kl}\big( X^{(l)}, Y^{(l)}\big)=0$ for all $k \neq l$. This implies 
\begin{align*}
\Delta_G (\Delta_{k, F}-\Delta_{kl, F})=\Delta_F (\Delta_{k, G}-\Delta_{kl, G}), \ \ \  \Delta_G (\Delta_{l, F}-\Delta_{kl, F})=\Delta_F (\Delta_{l,G}-\Delta_{kl, G})
\end{align*}
and hence 
\begin{align*}
\Delta_G \sum_{1\leq k <l \leq K}p_kp_l (\Delta_{k, F}-\Delta_{kl, F}+\Delta_{l, F}-\Delta_{kl, F})=\Delta_F \sum_{1\leq k <l \leq K}p_kp_l (\Delta_{k,G}-\Delta_{kl, G}+\Delta_{l,G}-\Delta_{kl, G}), 
\end{align*}
and hence 
$\rho_1=\rho_2$.
Therefore, when $\rho_1 \neq \rho_2$, we have 
\begin{align}\label{eqn:norm}
 \sqrt{n}  \bigg((\hat{\rho}_1  - \hat{\rho}_2 ) - (\rho_1 - \rho_2)\bigg)\stackrel{d}{\rightarrow} \mathcal{N}(0,16 \sum_{k \neq l}^K p^2_lp_k\sigma^2_{kl}).
 \end{align}
 
 When both the two GMDs are unknown,  $\hat{\rho}_1=\dfrac{\textrm{gCov}_n(X, Z)}{\widehat{\Delta}_F}$,  $\hat{\rho}_2=\dfrac{\textrm{gCov}_n(Y, Z)}{\widehat{\Delta}_G}$. 
 
 By (\ref{eqn:norm}), 
\begin{align}\label{eqn:norm2}
 \sqrt{n}  \bigg(\Big(\dfrac{\hat{\rho}_1 \widehat{\Delta}_F}{\Delta_F} - \dfrac{\hat{\rho}_2 \widehat{\Delta}_G}{\Delta_G}\Big) - (\rho_1 - \rho_2)\bigg)\stackrel{d}{\rightarrow} \mathcal{N}(0,16 \sum_{k \neq l}^K p^2_lp_k\sigma^2_{kl}).
 \end{align}

As $n \to \infty$,  by the almost sure behaviors of $U$-statistics \cite{Serfling1980}, we have $\widehat{\Delta}_F=\Delta_F+o(1)$ and $\widehat{\Delta}_G=\Delta_G+o(1)$. 
Then (\ref{eqn:norm2}) can be written as 
$$ \sqrt{n}  \bigg(\Big(\hat{\rho}_1\dfrac{ o(1)}{\Delta_F} -\hat{\rho}_2 \dfrac{ o(1)}{\Delta_G}\Big) +(\hat{\rho}_1-\hat{\rho}_2)- (\rho_1 - \rho_2)\bigg)\stackrel{d}{\rightarrow} \mathcal{N}(0,16 \sum_{k \neq l}^K p^2_lp_k\sigma^2_{kl}),$$
where $\Big(\hat{\rho}_1\dfrac{ o(1)}{\Delta_F} -\hat{\rho}_2 \dfrac{ o(1)}{\Delta_G}\Big) \to 0$ a.s.. 
By Slutsky theorem, and condition \textbf{C}1, we have 
\begin{align}\label{sig:h1}
\sqrt{n}  \bigg((\hat{\rho}_1  - \hat{\rho}_2)  - (\rho_1 - \rho_2)\bigg)\stackrel{d}{\rightarrow} \mathcal{N}(0,16 \sum_{k \neq l}^K p^2_lp_k\sigma^2_{kl}).
\end{align}

\end{proof}

\section{Acknowledgements}
We would like to thank the editor and referees for their valuable guidance and thoughtful suggestions, which have greatly improved the clarity and quality of this paper.

\textbf{Funding}: The research of Sameera Hewage was supported by the United States National Institutes of Health (NIH) under grant P20GM103434 through the West Virginia IDeA Network of Biomedical Research Excellence.

\end{document}